\DeclareMathOperator{\rank}{rank}
\DeclareMathOperator{\co}{co}
\DeclareMathOperator{\face}{face}
\DeclareMathOperator{\tr}{tr}
\def\C{\mathbb{C}}
\def\B{\mathcal{B}}
\newtheorem{theorem}{Theorem} 
\newtheorem{lemma}[theorem]{Lemma} 
\newtheorem{corollary}[theorem]{Corollary}
\theoremstyle{definition}
\newtheorem*{notation}{Notation}
\newtheorem*{definition}{Definition}
\theoremstyle{remark}
\newtheorem*{example}{Example}
\newtheorem*{remark}{Remark}
\newif\ifproofing
\newcommand{\note}[1]{\ifproofing\marginpar{#1}\fi}
\def\<{\langle}
\def\>{\rangle}
\def\wT{{\widetilde T}}
\DeclareMathOperator{\im}{im}
\DeclareMathOperator{\Sep}{Sep}
\begin{document}

\title{Finding decompositions of a class of separable states}
\author{Erik Alfsen}
\email{alfsen@math.uio.no}
\address{Mathematics Department, University of Oslo, Blindern 1053, Oslo, Norway}
\author{Fred Shultz}
\email{fshultz@wellesley.edu}
\address{Mathematics Department, Wellesley College, Wellesley, Massachusetts 02481, USA}
\keywords{entanglement, separable state, face}
\subjclass[2000]{Primary 15A30, 46N50, 81P40; Secondary 46L30, 81P16}
\date{February 16, 2012}

\begin{abstract}
By definition a  separable state  has the form $\sum_{i=1}^p A_i \otimes B_i$, where $0 \le A_i, B_i$ for each $i$. In this paper we consider the class of states which admit such a decomposition with $B_1, \ldots, B_p$ having independent images. We give a simple intrinsic characterization of this class of states, and  starting with a density matrix in this class, describe a procedure to find such a decomposition with $B_1, \ldots, B_p$ having independent images, and  $A_1, \ldots, A_p$ being distinct with unit trace.  Such a decomposition is unique, and we relate this to the facial structure of the set of separable states.
 
A special subclass of such separable states are those for which the rank of the matrix matches one marginal rank.  Such states have arisen in previous studies of separability (e.g., they are known to be a class for which the PPT condition is equivalent to separability). 

The states investigated also include a class that corresponds (under the Choi-Jamio\l kowski isomorphism) to the quantum channels called quantum-classical and classical-quantum by Holevo.
\end{abstract}

\maketitle

\section*{Introduction}

  A state on $M_m \otimes M_n$ is separable if it is a convex combination of product states.    States that are not separable are said to be entangled and are of substantial interest in quantum information theory since entanglement is at the heart of many applications.  Some useful necessary conditions are known for separability, e.g., the PPT condition, by which a separable state must have positive partial transpose \cite{Peres}.   There also are some necessary and sufficient conditions, e.g. \cite{Horo3}, which however are difficult to apply. Thus it would be of great interest to find a practical test for separability, at least for a significant class of states.  
  
  Closely related to this is the goal of finding a procedure to decompose interesting classes of separable states into a convex combination of  product states.  Such a procedure would not only shed light on separable states, but would provide a separability test for that class.
  
We will identify states with their associated density matrix, and also consider unnormalized states, which are then associated with positive semi-definite matrices.
(We will abbreviate ``positive semi-definite" to simply ``positive" hereafter.)
Thus a density matrix $T\in M_m \otimes M_n$ is separable if it admits a representation
\begin{equation}\label{sep def} T = \sum_{\gamma=1}^p A_\gamma \otimes B_\gamma\end{equation}
where each $A_\gamma$ and $B_\gamma$ is positive. Such a density matrix $T$ represents a mixed state on a bipartite quantum system composed of two subsystems, the $A$-system and the $B$-system, associated with $M_m$ and $M_n$ respectively.

In our previous paper \cite{JMP}, the authors studied  separable states with such a representation with each $A_\gamma$ and $B_\gamma$ rank one,  with the  requirement that  $B_1, \ldots, B_p$ be projections onto linearly independent vectors.
This class of states turns out to be same as the set of separable states $T$ with the property that  $T$ and the marginal state $\tr_AT = T_B$ obtained by tracing out over the $A$-system have the same rank, cf. Lemma \ref{rank lemma}.  The equivalence of these two formulations was established for states on $M_2 \otimes M_n$ in \cite{Kraus}, and then in complete generality in \cite[Lemma 6, and proof of Thm. 1]{HLVC}, where it was also shown that for states satisfying this rank requirement, the PPT condition is equivalent to separability. (An alternate  proof of the equivalence of these rank and independence conditions was given in \cite[Lemma 13]{RuskaiWerner}.)  In \cite{HLVC} the authors also gave a procedure for decomposing such states into a convex sum of pure product states, based on an inductive argument for finding a certain kind of product basis, and then a reduction to a block matrix whose blocks are normal and commute.    In this paper we also make use of a reduction to this type of matrix. The existence of special families of commuting normal matrices  played an important role in the investigation of separability in \cite{Lewenstein} as well.
  
The current paper investigates separable states for which there is no rank restriction, but   admitting a representation \eqref{sep def} in which $B_1, \ldots, B_p$ have independent images.  We call such states $B$-independent, and give an intrinsic way to determine if a state falls in this category (Theorem \ref{main}). We show that without knowing an explicit decomposition to begin with, there is a canonical way to locally filter $T$ to yield a state $\wT$ which admits a representation \eqref{sep def} in which $B_1, \ldots, B_p$ are orthogonal. (Of course, there is nothing special about the $B$-system compared to the $A$-system, and all results in this paper are valid with the roles of the $A$ and $B$ systems interchanged.) 

 This is then used to give a canonical form for $T$, and to find a decomposition of $T$ of the form \eqref{sep def}, cf. Theorem \ref{main}.  This decomposition can be chosen so that $A_1, \ldots, A_p$ are distinct and have unit trace, and in that case the representation is unique.  It is then simple to decompose further to get a representation of $T$ as a convex combination of pure product states (i.e., of density matrices where each is the projection onto the span of a product vector), and we describe when this decomposition is unique (Theorem \ref{unique}.) Finally, we show in Theorem \ref{faces} that if a state has a representation \eqref{sep def} with the images of the $A_\gamma$ disjoint and the images of the $B_\gamma$ independent, then the face of the space $S$ of separable states that is generated by this state is the direct convex sum of separable state spaces of lower dimension.

The density matrices investigated here are closely related to interesting classes of completely positive maps.  A  completely positive  map $\Phi:M_m \to M_n$ is \emph{entanglement breaking} if $(I \otimes \Phi)(\Gamma)$ is separable for all positive $\Gamma$, cf. \cite{HSR, Ruskai}. The Choi-Jamio\l kowski isomorphism \cite{Choi, Jamiol} is a linear isomorphism under which completely positive maps correspond to positive matrices. Under this correspondence, entanglement breaking maps correspond to separable matrices, so the results of this paper on convex decompositions of a class of separable states then can be transferred to give information about decompositions and identification of the corresponding class of entanglement breaking maps.

In particular, there are two important classes of entanglement breaking maps (quantum-classical channels and classical-quantum channels)  that have Choi matrices in the class of separable states investigated in the current paper.  These classes were originally singled out by Holevo \cite{Holevo}, and further investigated as special cases of entanglement breaking maps by Horodecki, Shor, and Ruskai in \cite{HSR, Ruskai}. These are shown in Theorem \ref{CQ} to be special cases of the classes of $A$-orthogonal and $B$-orthogonal density matrices, which play a key role in the current paper.  Theorem \ref{orthogonal} and Theorem \ref{CQ} together provide an intrinsic way to identify such  quantum channels without knowing an explicit Kraus decomposition ahead of time, as well as giving a procedure to find a Kraus decomposition of the appropriate form.

\section*{A class of separable density matrices}

\begin{definition} Subspaces $V_1, \ldots, V_p$ of a vector space are \emph{independent} if their sum is a direct sum. This is equivalent to  the implication
$$\sum_{\gamma = 1}^p x_\gamma = 0 \text{ with $x_\gamma \in V_\gamma$  for $1 \le \gamma \le p$} \implies \text{ all $x_\gamma = 0$}. $$
\end{definition}

We now define the central  class of separable density matrices that we will investigate.  Later in Theorem \ref{main} we will give an intrinsic characterization of this class.

\begin{definition}
A density matrix $T \in M_m \otimes M_n$ is \emph{$B$-independent} if $T$ admits a decomposition
\begin{equation}\label{T}
T = \sum_{\gamma =1}^p  A_\gamma \otimes B_{\gamma}
\end{equation}
where $0 \le A_\gamma, B_\gamma$ for $1\le \gamma \le p$, with the images of $B_1, \ldots, B_p$ independent. 
\end{definition}

\begin{example} Let $x_1, \ldots, x_p \in \C^m$ and $y_1, \ldots, y_p \in \C^n$ be unit vectors,
with $y_1, \ldots, y_p$ linearly independent, and $0 < \lambda_1, \ldots, \lambda_p$ with $\sum_\gamma \lambda_\gamma = 1$.
Let $T$ be the convex combination
\begin{equation}\label{MR} T = \sum_{\gamma=1}^p \lambda_\gamma P_{x_\gamma} \otimes P_{y_\gamma}
\end{equation}
where for a unit vector $z$, $P_z$ denotes the projection onto $\C z$.
Then $T$ is $B$-independent.  The uniqueness of such decompositions, and the facial structure of faces of the separable state space  generated by such  states, were investigated by the current authors in \cite{JMP}. As was discussed in the introduction, such states  played an important role in \cite{HLVC} and also appeared in \cite{RuskaiWerner}.
\end{example}


We will return to the subject of $B$-independent states after developing some necessary results.

\section*{One sided local filtering}

\begin{definition} A linear map $\Phi:M_d \to M_d$ is a \emph{filter} if there is a positive $A \in M_d$ such that $\Phi(X) = AXA$.  (We do not require that $A$ be invertible.)
 A map $\Phi:M_m \otimes M_n \to M_m \otimes M_n$ is a \emph{local filter} if there are positive $A, B$ such that $\Phi(X) = (A \otimes B)X(A \otimes B)$. 
\end{definition}

Applications of filtering, e.g., to distillation of entanglement, date back at least to \cite{Bennett, Bennett1, Gisin}. It is well known that we can  apply a local filter to any density matrix to arrange for one or the other partial trace to be a projection, as we now describe. 

\begin{definition} If $A \ge 0$, then $A^\#$ denotes the Penrose pseudo-inverse of $A$, i.e., the unique positive matrix which is zero on $(\im A)^\perp$ and satisfies  $A^\# A = AA^\# = P_A$, where $P_A$ is the projection onto the image of $A$. If a spectral decomposition of $A$ is $A = \sum_i \lambda_i P_i$ with all $\lambda_i > 0$, then $A^\# = \sum_i \lambda_i^{-1} P_i$.  
\end{definition}

\begin{definition} We write $\tr_B$ and $\tr_A$ for the partial trace maps on $M_m \otimes M_n$, and if $T \in M_m \otimes M_n$ then we write $T_B = \tr_A T$ and $T_A = \tr_B T$.
\end{definition}

\begin{definition} Let $0 \le T \in M_m \otimes M_n$. Then we denote by $\wT$ the matrix
\begin{equation}\label{wT def}
\wT = (I \otimes ((T_B)^\#)^{1/2})T(I \otimes ((T_B)^\#)^{1/2}).
\end{equation}
\end{definition}

We view the pair $(\wT, T_B)$ as partitioning information about $T$ into a state $T_B$ that contains information about $T$ on the subsystem $B$, and another part $\wT$ which contains  information about $T$ relating to the system $A$ as well as the interaction between $A$ and $B$ systems.  

We will show later in this section that $T$ can be recovered from the pair $(\wT, T_B)$.
First, we  discuss  various facts about partial traces and filters which we need.

If $T = \sum_{ij} E_{ij} \otimes T_{ij}\in M_m \otimes M_n$, then by definition
\begin{equation} \label{pt def} \tr_A T = \sum_i T_{ii} \text{ and }  \tr_B T = \sum_{ij} \tr(T_{ij}) E_{ij}.
\end{equation}

It is well-known that the partial trace maps are positive maps.  We now show that they are also \emph{faithful}, i.e., if $T \ge 0$ and either partial trace of $T$ is zero, then $T$ is zero.    
(We expect the following is well-known, but we have  included it here for lack of an explicit reference.)

\begin{lemma}\label{faithful}
The partial trace maps are faithful.
\end{lemma}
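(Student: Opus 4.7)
The plan is to reduce to the rank one case and then read the result off from the explicit block formula \eqref{pt def}. Since $T\ge 0$, I would start by taking a spectral decomposition $T=\sum_k \psi_k\psi_k^*$ with $\psi_k\in\C^m\otimes\C^n$. Because $\tr_A$ and $\tr_B$ are positive linear maps, each $\tr_A(\psi_k\psi_k^*)$, respectively $\tr_B(\psi_k\psi_k^*)$, is positive; a sum of positive matrices equal to zero forces each summand to vanish. Thus it suffices to prove the lemma assuming $T$ has rank one.

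In that case I would expand $\psi=\sum_i e_i\otimes v_i$ with $v_i\in\C^n$, giving the block presentation $T=\sum_{i,j}E_{ij}\otimes v_i v_j^*$. Plugging this into \eqref{pt def} yields
\[ \tr_A T=\sum_i v_i v_i^*,\qquad \tr_B T=\sum_{i,j}\langle v_j,v_i\rangle E_{ij}.\]
If $\tr_A T=0$, then the positive matrices $v_i v_i^*$ sum to zero, so each individually vanishes, hence $v_i=0$ for every $i$. If instead $\tr_B T=0$, then in particular its diagonal entries $\langle v_i,v_i\rangle = \|v_i\|^2$ all vanish, forcing $v_i=0$ for every $i$. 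Either way $\psi=0$, so $T=0$.

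I do not anticipate a real obstacle: the entire proof is a short calculation once the reduction to rank one is in hand, and that reduction is a routine consequence of the positivity of the partial trace maps together with the fact that a sum of positive matrices can vanish only if each summand does. The only bookkeeping point is to keep straight which slot in \eqref{pt def} is traced out, but this is immediate from the formulas as written.
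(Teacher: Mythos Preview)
Your proof is correct. It differs from the paper's argument in its organizing device: you reduce to the rank-one case via a spectral decomposition of $T$ and then compute both partial traces explicitly from the block formula, whereas the paper works with a general positive $T$ throughout and first establishes the auxiliary equivalence $T=0 \iff T_{ii}=0$ for all $i$ by a square-root trick (from $(E_{ii}\otimes I)T(E_{ii}\otimes I)=0$ one gets $(E_{ii}\otimes I)T^{1/2}=0$, and summing over $i$ kills $T^{1/2}$). Your route is arguably the more transparent one, since the rank-one computation is immediate; the paper's route has the minor side benefit of isolating the block-diagonal criterion \eqref{Aii} as a standalone fact, though it is not reused later. Both arguments rely on the same two background facts---positivity of the partial trace maps and the vanishing of each summand in a zero sum of positive matrices---so neither is more elementary in any essential way.
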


\begin{proof} 
Let  $0 \le T = \sum_{ij} E_{ij} \otimes T_{ij}\in M_m \otimes M_n$. We first show
\begin{equation}\label{Aii} T = 0 \iff \text{ $T_{ii} = 0$ for $1 \le i \le m$.}
\end{equation}
If $T = 0$, then clearly all $T_{ij} = 0$, so in particular all $T_{ii} = 0$. Conversely, suppose all $T_{ii}$ are zero. 
Then $(E_{ii} \otimes I)T(E_{ii} \otimes I) = E_{ii} \otimes T_{ii} = 0$ for all $i$. Then for $1 \le i\le m$,
$$((E_{ii}\otimes I)T^{1/2})((E_{ii}\otimes I)T^{1/2})^* = 0$$
so $(E_{ii}\otimes I)T^{1/2} = 0$. Adding gives
$$0 = \sum_i (E_{ii} \otimes I) T^{1/2} =  (I \otimes I)T^{1/2} = T^{1/2}$$
and hence $T = 0$.

Now we are ready to prove faithfulness of the partial traces.   By \eqref{pt def}, if $\tr_B(T) =0$, then in particular $\tr(T_{ii}) = 0$ for each $i$. Since $0 \le T$, then $0 \le T_{ii}$ for each $i$, so $\tr(T_{ii}) = 0$ implies $T_{ii} = 0$ for all $i$ and thus $T = 0$.

On the other hand,  $\tr_A(T) = 0$ implies $\sum_i T_{ii} = 0$, and by positivity of each $T_{ii}$, we again have $T_{ii} = 0$ for each $i$, and thus $T = 0$.
\end{proof}

We next review some useful facts about projections and images. (For additional background, cf. \cite[Chapter 3]{Alfsen-Shultz}.) If $A = A^*$ and $P$ is a projection, then
\begin{equation}\label{range}
\im A \subset \im P \iff PAP = A.
\end{equation}
(Indeed,  if $\im A \subset \im P$, then $PA = A$, so taking adjoints and using $A^*= A$ gives $A = AP$. Then $PAP = P(AP) = PA= A$. The converse implication is clear.)

If $E \in M_r$ is a projection, then we write $E' = I-E$, where $I$ is the identity in $M_r$.  Note that for $E$ a projection in $M_m$, $(E \otimes I)' = (I \otimes I)- (E \otimes I) = E' \otimes I$. 
For any projection $R$ and positive operator $T$ we have
\begin{equation}\label{prime}
RTR = T \iff R'TR' = 0,
\end{equation}
cf., e.g., \cite[Lemma 2.20]{Alfsen-Shultz}.

Finally, we observe that if $0 \le A_1, A_2, \ldots, A_p$, then
\begin{equation}\label{image of sum}
\im \sum_i A_i = \sum_i \im A_i.
\end{equation} 
Indeed, for $1\le j\le p$ we have $A_j \le \sum_i A_i$ so $\ker \sum_i A_i \subset \ker A_j$. Taking orthogonal complements shows $\im A_j \subset \im \sum_iA_i$, which implies $\sum_j \im A_j \subset \im \sum_i A_i$. The opposite containment is evident, so   \eqref{image of sum} follows.

 The following result is clear for separable $T$, but requires a little more work for general $T$.

\begin{lemma}\label{product range} If $0 \le T \in M_m \otimes M_n$, the minimal product subspace containing the image of $T$ is $\im T_A \otimes \im T_B$.  In particular, if $P_B$ is the projection onto the image of $T_B$, then $(I \otimes P_B)T(I \otimes P_B) = T$.
\end{lemma}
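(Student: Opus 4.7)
The plan is to first establish the containment $\im T \subseteq \im T_A \otimes \im T_B$, then verify that this product subspace is minimal among product subspaces containing $\im T$, and finally read off the ``in particular'' identity from the first step.

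For the containment, let $P_A$ and $P_B$ be the projections onto $\im T_A$ and $\im T_B$. I would first prove the identity $(I \otimes P_B)T(I \otimes P_B) = T$. By \eqref{prime}, this is equivalent to $(I \otimes P_B')T(I \otimes P_B') = 0$. The left-hand side is positive, being a conjugation of $T \ge 0$ by a self-adjoint operator, and its partial trace $\tr_A$ equals $P_B' T_B P_B'$, which vanishes because $P_B$ projects onto $\im T_B$. Faithfulness of the partial trace (Lemma \ref{faithful}) then forces the operator itself to vanish. A symmetric argument gives $(P_A \otimes I)T(P_A \otimes I) = T$. Since $P_A \otimes I$ and $I \otimes P_B$ commute with product $P_A \otimes P_B$, I can combine the two identities to get $(P_A \otimes P_B)T(P_A \otimes P_B) = T$, and \eqref{range} then yields $\im T \subseteq \im(P_A \otimes P_B) = \im T_A \otimes \im T_B$.

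For minimality, suppose $\im T \subseteq V \otimes W$ for some subspaces $V \subseteq \C^m$ and $W \subseteq \C^n$. Let $Q$ be the projection onto $W$; since $\im T \subseteq \C^m \otimes W = \im(I \otimes Q)$, \eqref{range} gives $(I \otimes Q)T(I \otimes Q) = T$. Applying $\tr_A$ to both sides yields $Q T_B Q = T_B$, and \eqref{range} again delivers $\im T_B \subseteq W$. The symmetric argument with the projection onto $V$ gives $\im T_A \subseteq V$, so $\im T_A \otimes \im T_B \subseteq V \otimes W$. Finally, the ``in particular'' claim is exactly the identity $(I \otimes P_B)T(I \otimes P_B) = T$ established in the first step, so no extra work is needed. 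The only substantive ingredient in the whole argument is faithfulness of the partial trace; once Lemma \ref{faithful} is invoked, everything else is a short manipulation of the projection identities \eqref{range} and \eqref{prime}.
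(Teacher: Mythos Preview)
Your proof is correct and follows essentially the same approach as the paper: both arguments reduce the statement to the equivalence $(P\otimes I)T(P\otimes I)=T \iff \im T_A\subset\im P$ (and its $B$-side analog), and both establish that equivalence via \eqref{prime}, the identity $\tr_B((X\otimes I)T(X\otimes I))=X(\tr_B T)X$, and faithfulness of the partial trace (Lemma~\ref{faithful}). The only difference is organizational---the paper proves the biconditional \eqref{range tensor} once and reads off both containment and minimality, whereas you handle the two directions separately---but the content is the same.
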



\begin{proof} Let $V\subset\C^m$ and $W \subset \C^n$ be subspaces, and let the corresponding projections be $P$ and $Q$. Then by \eqref{range}, $\im T \subset V \otimes W$ iff $(P\otimes Q)T(P \otimes Q) = T$.

Note $(P \otimes Q)T(P \otimes Q) = T$ is equivalent to the combination of $(P \otimes I)T(P \otimes I) = T$ and $(I \otimes Q)T(I \otimes Q) = T$.  Thus it suffices to show that 
\begin{equation}\label{range tensor}
(P \otimes I)T(P \otimes I) = T \iff \im P \supset \im(T_A)
\end{equation}
 together with the corresponding statement for $T_B$.  Since the proof of the  statements for $T_A$ and $T_B$ are essentially the same,  we  just will prove the statement for $T_A$.

   We will make use of the following identity valid for all $T \in M_m \otimes M_n$ and all $X \in M_n$:
\begin{equation} \label{pt}
\tr_B(X \otimes I)T (X \otimes I) = X(\tr_B T) X.
\end{equation}
Thus 
\begin{align}
(P \otimes I)T (P \otimes I) = T &\iff (P' \otimes I)(T)(P' \otimes I) = 0 \text{ by \eqref{prime}}\cr
&\iff \tr_B((P' \otimes I)T(P' \otimes I)) =0\text{ by Lemma \ref{faithful}}\cr
& \iff  P'(\tr_B T) P' = 0 \text{ by \eqref{pt}} \cr
& \iff P\tr_B T P = \tr_BT \text{ by \eqref{prime}} \cr
& \iff \im P \supset \im \tr_B T = \im(T_A) \text{ by \eqref{range}}.
\end{align}
This completes the proof of \eqref{range tensor}, and hence finishes the proof of the lemma.
\end{proof}

The next result relates properties of $T$ and $\wT$, and shows that $T$ can be recovered from the pair $(\wT, T_B)$.

\begin{lemma} \label{wT lemma}Let $0 \le T \in M_m \otimes M_n$ and define $\wT$ as in \eqref{wT def}. Then
\begin{equation}\label{sep}
T = (I \otimes T_B^{1/2})\wT(I \otimes T_B^{1/2}).
\end{equation}
 $T$ will be separable iff $\wT$ is separable, and $\tr_A \wT = P_B$ (where $P_B$ is the projection onto the image of $T_B$).

\end{lemma}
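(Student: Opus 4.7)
The plan is to verify the three assertions in turn, each by direct computation using the spectral-calculus identity $T_B^{1/2}(T_B^\#)^{1/2} = (T_B^\#)^{1/2}T_B^{1/2} = P_B$, which is immediate from the spectral decomposition in the definition of $A^\#$.

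For the identity \eqref{sep}, I would just substitute the definition of $\wT$:
\[ (I \otimes T_B^{1/2})\wT(I \otimes T_B^{1/2}) = (I \otimes T_B^{1/2}(T_B^\#)^{1/2})\,T\,(I \otimes (T_B^\#)^{1/2}T_B^{1/2}) = (I \otimes P_B)T(I \otimes P_B). \]
Lemma \ref{product range} then gives $(I \otimes P_B)T(I \otimes P_B) = T$, finishing this step.

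For the separability equivalence, note that if $T = \sum_\gamma A_\gamma \otimes B_\gamma$ is a separable decomposition, then conjugating by $I \otimes (T_B^\#)^{1/2}$ term-by-term gives
\[ \wT = \sum_\gamma A_\gamma \otimes \bigl((T_B^\#)^{1/2} B_\gamma (T_B^\#)^{1/2}\bigr), \]
and each factor $(T_B^\#)^{1/2} B_\gamma (T_B^\#)^{1/2}$ is positive, so $\wT$ is separable. The converse is the same argument applied to \eqref{sep}, conjugating by $I \otimes T_B^{1/2}$ instead.

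Finally, for $\tr_A \wT = P_B$, I would invoke the analogue of identity \eqref{pt} for the other partial trace, namely $\tr_A((I \otimes Y)T(I \otimes Y)) = Y(\tr_A T)Y = Y T_B Y$ for all $Y \in M_n$ (proved identically to \eqref{pt}, just by inspecting the block form of $T$). Applied with $Y = (T_B^\#)^{1/2}$ this gives
\[ \tr_A \wT = (T_B^\#)^{1/2} T_B (T_B^\#)^{1/2} = P_B, \]
where the last equality follows from the same spectral-calculus identity used in the first step. All three claims are essentially routine verifications; the only mild subtlety is noticing that $T_B^\#$ is supported on $\im T_B$, so the products with $T_B$ collapse to $P_B$ rather than $I$, and this is exactly what makes the normalization $\tr_A \wT = P_B$ (rather than $I$) come out correctly.
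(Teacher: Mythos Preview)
Your proof is correct and essentially identical to the paper's: both substitute the definition of $\wT$ and use $T_B^{1/2}(T_B^\#)^{1/2}=P_B$ together with Lemma~\ref{product range} to get \eqref{sep}, both deduce the separability equivalence from \eqref{wT def} and \eqref{sep} by local filtering, and both compute $\tr_A\wT$ via the partial-trace identity analogous to \eqref{pt}. The only differences are in ordering and level of detail.
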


\begin{proof} From the definition of $\wT$, separability of $T$ implies that of $\wT$. 
For $F = ((T_B)^\#)^{1/2}$ we have
$$\tr_A \wT = \tr_A (I \otimes F)T(I \otimes F) = F(\tr_A T)F = FT_BT = P_B.$$
 Furthermore, 
\begin{align}\label{reverse}
&(I \otimes T_B^{1/2})\wT(I \otimes T_B^{1/2}) \cr
&=
 (I \otimes T_B^{1/2}) (I \otimes ((T_B)^\#)^{1/2})T(I \otimes T_B^{1/2})(I \otimes ((T_B)^\#)^{1/2}) \cr
 &= (I\otimes P_B)T(I \otimes P_B)
 \end{align}
This would prove \eqref{sep} if we knew the range of $T$ were contained in $\C^m \otimes \im P_B$. This follows from Lemma \ref{product range}. Finally \eqref{sep} shows that separability of $\wT$ implies separability of $T$.
\end{proof}

\section*{$B$-orthogonal density matrices}

In this section we describe a canonical form for a class of positive matrices which we call $B$-orthogonal, and which is a subclass of the $B$-independent matrices. In the following section we will apply these results to achieve a canonical representation for the full class of $B$-independent matrices.

\begin{definition} Positive matrices in $M_r$ are \emph{orthogonal} if their images are orthogonal.
 A density matrix $T$ is \emph{$B$-orthogonal} if it admits a representation
\begin{equation}\label{T orthog}
T = \sum_{\gamma=1}^p A_\gamma \otimes B_\gamma
\end{equation}
with  $0 \le A_\gamma, B_\gamma$ and with the $\{B_\gamma\}$ matrices orthogonal. 
Similarly we say $T$ is \emph{$A$-orthogonal} if it admits  a representation \eqref{T orthog} with  the $\{A_\gamma\}$ matrices orthogonal.
\end{definition}

\begin{definition}  $\{E_{ij}\}$ are the standard matrix units of $M_m$. 
For any matrix $S \in M_m\otimes M_n$,  we denote by $\{S_{ij} \mid 1 \le i, j \le m\}$ the unique matrices in $M_n$ such that 
$$S = \sum_{ij} E_{ij} \otimes S_{ij}.$$ 
\end{definition}

The following gives a canonical form for $B$-orthogonal matrices, and a readily tested necessary and sufficient condition for $B$-orthogonality.

\begin{theorem} \label{orthogonal}Let $0 \le T = \sum_{ij} E_{ij} \otimes T_{ij} \in M_m \otimes M_n$. Then the following are equivalent.
\begin{enumerate}
\item[(i)] $T$ is $B$-orthogonal.
\item[(ii)] All $T_{ij}$ are normal and mutually commute.   
\end{enumerate}
Furthermore, if $T$ is $B$-orthogonal, then $T$ admits a unique representation
\begin{equation}\label{T5}
T = \sum_{\gamma=1}^p A_\gamma \otimes Q_\gamma,
\end{equation}
with  $Q_1, \ldots, Q_p$ orthogonal projections, and $A_1, \ldots, A_p$ distinct nonzero positive matrices. 

The projections $Q_1, \ldots, Q_p$ will be the projections onto the joint eigenspaces of $\{T_{ij}\}$ (excluding the joint zero  eigenspace), and will have sum $P_B$ (the projection onto the image of $T_B$). The matrices $A_\gamma$ are given by
\begin{equation}\label{Agamma} A_\gamma = \tr_A (I \otimes Q_\gamma)T(I \otimes Q_\gamma).
\end{equation}
\end{theorem}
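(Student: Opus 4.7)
The plan is to reduce everything to the simultaneous spectral decomposition of the block entries $\{T_{ij}\}$, reading the tensor-product structure off the joint eigenspaces of this commuting normal family. For (i) $\Rightarrow$ (ii), start with a $B$-orthogonal decomposition $T = \sum_\gamma A_\gamma \otimes B_\gamma$. Positivity of the $B_\gamma$ together with pairwise orthogonality of their images forces $B_\gamma B_\delta = \delta_{\gamma\delta} B_\gamma^2$, so equating matrix-unit blocks yields $T_{ij} = \sum_\gamma (A_\gamma)_{ij} B_\gamma$. A direct calculation collapses both $T_{ij}T_{ij}^*$ and $T_{ij}^*T_{ij}$ to $\sum_\gamma |(A_\gamma)_{ij}|^2 B_\gamma^2$, proving normality, and collapses $T_{ij}T_{kl}$ and $T_{kl}T_{ij}$ to $\sum_\gamma (A_\gamma)_{ij}(A_\gamma)_{kl} B_\gamma^2$, proving commutativity.

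For (ii) $\Rightarrow$ (i) together with existence of the canonical form, invoke the classical theorem that a commuting family of normal matrices admits an orthonormal basis of joint eigenvectors. Let $Q_1, \ldots, Q_p$ be the projections onto the joint eigenspaces of $\{T_{ij}\}$ corresponding to nonzero joint eigenvalue tuples, and let $\lambda_{ij}^{(\gamma)}$ denote the value of $T_{ij}$ on $\im Q_\gamma$. Then $T_{ij} = \sum_\gamma \lambda_{ij}^{(\gamma)} Q_\gamma$, and reassembling via the matrix-unit expansion gives $T = \sum_\gamma A_\gamma \otimes Q_\gamma$ with $(A_\gamma)_{ij} := \lambda_{ij}^{(\gamma)}$. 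Compressing by $I \otimes Q_\gamma$ and using orthogonality of the $Q_\delta$ yields $(I \otimes Q_\gamma)T(I \otimes Q_\gamma) = A_\gamma \otimes Q_\gamma \ge 0$; since $Q_\gamma \ne 0$, this forces $A_\gamma \ge 0$. The $A_\gamma$ are distinct since joint eigenvalue tuples differ across $\gamma$, and nonzero because we have discarded the joint zero eigenspace. Applying $\tr_A$ to \eqref{T5} gives $T_B = \sum_\gamma \tr(A_\gamma) Q_\gamma$ with $\tr(A_\gamma) > 0$, so \eqref{image of sum} together with orthogonality of the $Q_\gamma$ identifies $\sum_\gamma Q_\gamma$ with $P_B$.

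For uniqueness, note that any representation of the stated form yields $T_{ij} = \sum_\gamma (A_\gamma)_{ij} Q_\gamma$, which is precisely the joint spectral decomposition of the $T_{ij}$ along the orthogonal projections $Q_\gamma$. Distinctness of the $A_\gamma$ makes the tuples $((A_\gamma)_{ij})_{i,j}$ distinct across $\gamma$, pinning down the $Q_\gamma$ as the joint eigenspace projections; each $A_\gamma$ is then recovered from $(I \otimes Q_\gamma)T(I \otimes Q_\gamma) = A_\gamma \otimes Q_\gamma$, yielding \eqref{Agamma}. The main obstacle is coordinating the tensor-product viewpoint on $T$ (matrix-unit expansion) with the spectral viewpoint on the commuting normal family $\{T_{ij}\}$; once these two pictures are harmonized via the projections $Q_\gamma$, the positivity, distinctness, non-triviality, and uniqueness claims follow by routine tensor and spectral bookkeeping.
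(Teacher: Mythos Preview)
Your argument is correct and follows essentially the same route as the paper: read off $T_{ij}=\sum_\gamma (A_\gamma)_{ij}B_\gamma$ from a $B$-orthogonal decomposition to obtain (ii), and conversely use the joint spectral decomposition of the commuting normal family $\{T_{ij}\}$ to build the canonical form, with uniqueness coming from the fact that any such representation forces the $Q_\gamma$ to be the joint eigenspace projections. Your write-up is in fact slightly more complete than the paper's: you explicitly verify $A_\gamma\ge 0$ via $(I\otimes Q_\gamma)T(I\otimes Q_\gamma)=A_\gamma\otimes Q_\gamma\ge 0$ and check $\sum_\gamma Q_\gamma=P_B$ by taking $\tr_A$ of \eqref{T5}, both of which the paper asserts but does not spell out.
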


\begin{proof}(i) $\implies$ (ii). If \eqref{T orthog} holds with $B_1, \ldots, B_p$  orthogonal, then for each pair of indices $i,j$
\begin{equation}\label{Tpq}
T_{ij} = \tr_A ((E_{ji} \otimes I)T) = \sum_{\gamma=1}^p \tr(E_{ji}A_\gamma) B_\gamma.
\end{equation}
Since $B_1, \ldots, B_p$ are orthogonal, then   $B_1, \ldots, B_p$ commute. It follows that the matrices $\{T_{ij}\mid 1 \le i,j\le m\}$ commute and are normal. 

(ii) $\implies$ (i) Conversely, suppose $\{T_{ij}\mid 1 \le i,j\le m\}$ commute and are normal.  Define $Q_1, \ldots, Q_p$ to be the projections onto the joint eigenspaces (for non zero eigenvalues) of $\{T_{ij}\}$. For each $i, j$ write 
\begin{equation}\label{Tij} T_{ij} = \sum_{\gamma=1}^p \lambda_\gamma^{i,j}Q_\gamma.\end{equation}
Then
\begin{align}
T &= \sum_{ij} E_{ij} \otimes T_{ij} = \sum_{ij} E_{ij} \otimes \left(\sum_{\gamma=1}^p \lambda_\gamma^{i,j} Q_\gamma\right)\cr
&= \sum_{\gamma=1}^p \left(\sum_{ij} \lambda_\gamma^{i,j} E_{ij} \right) \otimes Q_\gamma.
\end{align}

For each $\gamma$ define $A_\gamma = \sum_{ij} \lambda_\gamma^{i,j} E_{ij} \in M_m$. 
Then $T = \sum_\gamma A_\gamma \otimes Q_\gamma$.  For each $i,j,
\gamma$ we have $\lambda_\gamma^{i,j} = (A_\gamma)_{ij}$. Thus by the definition of the joint eigenspaces of $\{T_{ij}\}$, for $\gamma_1 \not= \gamma_2$ we must have $A_{\gamma_1} \not= A_{\gamma_2}$, and hence $A_1, \ldots, A_p$ are distinct.
Now orthogonality of $Q_1\ldots, Q_p$ implies \eqref{Agamma}.  

Finally, we prove uniqueness.  Suppose that we are given any representation \eqref{T5} of $T$ where $\{Q_\gamma\}$ are orthogonal projections and $\{A_\gamma\}$ distinct nonzero positive matrices. Then for $1 \le i,j \le m$,
$$T_{ij} = \tr(E_{ji} \otimes I)T = \sum_\gamma \tr(E_{ji}A_\gamma) Q_\gamma.$$
Then the image of each $Q_\gamma$ consists of eigenvectors for  $T_{ij}$ for the eigenvalues $\tr(E_{ji}A_\gamma)$, and by distinctness of $A_1, \ldots, A_p$ for $\gamma_1 \not= \gamma_2$ there is some pair of indices $i,j$ such that $\tr(E_{ji}A_{\gamma_1}) \not= \tr(E_{ji}A_{\gamma_2})$, so the $Q_\gamma$ are precisely the projections onto the joint eigenspaces.  
\end{proof}

\begin{remark} The condition (ii) is equivalent to the existence of an orthonormal basis of joint eigenvectors for $\{T_{ij}\}$, as is well known.
\end{remark}

\section*{A canonical form for $B$-independent matrices}

The following describes how to map  positive matrices with independent images to orthogonal projections by filtering with a positive matrix.  We say an Hermitian matrix $A \in M_n$ \emph{lives} on a subspace $H$ of $\C^n$ if $\im A \subset H$ (or equivalently, if $A = 0$ on $H^\perp$).

\begin{lemma}\label{projections} Let $X_1, \ldots, X_p \in \C^n$ with $\im X_1, \ldots, \im X_p$ independent, and let $P$ be the projection on the image of $\sum_i X_i$. Then $$A = ((\sum_i X_i)^\#)^{1/2}$$ is the unique positive matrix living on $\im P$ such that $\{AX_iA\mid 1 \le i \le p\}$ are  orthogonal projections with sum $P$. \end{lemma}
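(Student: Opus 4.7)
The plan is to set $S = \sum_{i=1}^p X_i$. By \eqref{image of sum}, $\im S = \sum_i \im X_i$, and independence upgrades this sum to the direct sum $V := V_1 \oplus \cdots \oplus V_p$ with $V_i := \im X_i$; moreover $V = \im P$. The filter $A = (S^\#)^{1/2}$ satisfies $A^2 = S^\#$, lives on $V$, and (being a function of $S$'s spectral projections) commutes with $S$, so
\begin{equation*}
\sum_i AX_iA \;=\; ASA \;=\; A^2 S \;=\; S^\# S \;=\; P
\end{equation*}
is immediate. The real work lies in showing each $Y_i := AX_iA$ is an orthogonal projection and that $Y_iY_j = 0$ for $i \neq j$. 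Expanding $Y_iY_j = A(X_iS^\#X_j)A$, the whole statement collapses to the key identity
\begin{equation*}
X_i\, S^\#\, X_j \;=\; \delta_{ij}\,X_i,
\end{equation*}
from which $Y_i^2 = AX_iA = Y_i$ and $Y_iY_j = 0$ for $i \neq j$ follow at once.

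To prove the identity, I will exploit the direct sum $V = \bigoplus_j V_j$ via the (generally non-orthogonal) projections $\pi_j \colon V \to V_j$ along $\bigoplus_{k \neq j} V_k$, extended by $0$ on $V^\perp$. The crucial observation is that for $v \in V$, $Sv = \sum_k X_k v$ is already the decomposition of $Sv$ along $\bigoplus_k V_k$, since $X_k v \in V_k$; hence $\pi_j(Sv) = X_j v$, i.e., $X_j = \pi_j S$ on $V$. Because $S^\#$ vanishes on $V^\perp$ and inverts $S$ on $V$ (with $SS^\# = P$), this upgrades to $X_j S^\# = \pi_j$ on all of $\C^n$. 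Then $X_i S^\# X_j = \pi_i X_j$, which equals $X_i$ when $i = j$ (as $\pi_i$ fixes $V_i \supseteq \im X_i$) and $0$ when $i \neq j$ (as $\pi_i$ annihilates $V_j \supseteq \im X_j$). I expect this identity step---finding the right way to convert the independence hypothesis into an algebraic relation via the oblique projections $\pi_j$---to be the main obstacle; everything else is formal manipulation.

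For uniqueness, suppose $A'$ is any positive matrix living on $\im P = V$ such that $\{A'X_iA'\}$ are orthogonal projections summing to $P$. Summing gives $A'SA' = P$, and restricting to $V$ shows that $U := (A'|_V)(S|_V)^{1/2}$ satisfies $UU^* = I_V$. Since $V$ is finite-dimensional, $U$ is then unitary, so $U^*U = I_V$ as well, forcing $(A'|_V)^2 = (S|_V)^{-1} = S^\#|_V$. Uniqueness of the positive square root yields $A'|_V = A|_V$, and since both $A'$ and $A$ vanish on $V^\perp$, we conclude $A' = A$.
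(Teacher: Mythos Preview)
Your argument is correct, but it takes a different route from the paper's. The paper's proof avoids the oblique projections $\pi_j$ and the identity $X_iS^\#X_j=\delta_{ij}X_i$ altogether: after observing $\sum_i Y_i=P$, it notes that $A$ is invertible on $\im P$, so the $Y_i=AX_iA$ inherit independent images from the $X_i$; then from $Y_j=PY_j=\sum_iY_iY_j$ and the uniqueness of direct-sum decompositions it reads off $Y_iY_j=0$ for $i\neq j$ and $Y_j^2=Y_j$ in one line. For uniqueness the paper simply multiplies $A_0SA_0=P$ on both sides by $A_0^\#$ (after observing $\im A_0=\im P$) to get $S=(A_0^\#)^2$ directly. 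Your approach is more algebraic and yields the sharper intermediate identity $X_iS^\#X_j=\delta_{ij}X_i$, which is of independent interest; the paper's is shorter and stays entirely at the level of image independence, bypassing any discussion of non-orthogonal projections. Your unitary trick for uniqueness is also fine but slightly heavier than the paper's pseudo-inverse manipulation.
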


\begin{proof} Let $A = ((\sum_i X_i)^\#)^{1/2}$, and define $Y_i = AX_iA$ for $1\le i \le p$. Then
$$\sum_i Y_i = \sum_i AX_iA = A\left(\sum_i X_i \right)A = P,$$
where $P$ is the projection onto the image of $\sum_i X_i$.  By assumption, $\im X_1, \ldots, \im X_p$ are independent.  Since for each $i$, $A$ is invertible on $\im P \supset \im X_i$, and $\im Y_i \subset A(\im X_i)$, then $Y_1, \ldots, Y_p $ have independent images. Now for $1 \le j \le p$,
$$Y_j = PY_j = \sum_i Y_iY_j,$$
and then independence of the $Y$'s implies $Y_iY_j = 0$ for $i \not= j$, and $Y_j^2 = Y_j$, so  $Y_1, \ldots, Y_p$ are orthogonal projections with sum $P$.

Finally, to prove uniqueness, suppose that $0 \le A_0$, with $A_0$ living on $\im P$ and with $\{A_0X_iA_0\mid 1 \le i \le p\}$ projections with sum $P$.  Then $\im A_0 \subset \im P$, and
\begin{equation}\label{A0}
A_0(\sum_i X_i) A_0 = P,
\end{equation}
so $\im A_0 = \im P$.  Multiplying \eqref{A0} by $A_0^{\#}$ on left and right of each side gives $\sum_i X_i = (A_0^\#)^2$, so $A_0 = ((\sum_i X_i)^\#)^{1/2}$.
\end{proof}

\begin{theorem} \label{main} Let $0 \le T\in M_m \otimes M_n$.  The following are
equivalent.
\begin{enumerate}
\item[(i)] $T$ is $B$-independent.
\item[(ii)] $\wT$ is $B$-orthogonal.
\item[(iii)] All $\wT_{ij}$ are normal and mutually commute.   
\end{enumerate}
If $T$ is $B$-independent then $T$ admits a unique decomposition
  \begin{equation}\label{T2}
 T = \sum_{\gamma=1}^p A_\gamma \otimes B_\gamma
 \end{equation}
with  $0 \le A_\gamma, B_\gamma$, $\tr A_\gamma = 1$, $A_1, \ldots, A_p$ distinct, and $B_1, \ldots, B_p$ independent.

Let $Q_1, \ldots, Q_p$   be the projections corresponding to the joint eigenspaces of $\{\wT_{ij}\}$ excluding the subspace corresponding to the zero eigenvalue. Then the unique decomposition \eqref{T2} is given by 
 \begin{equation}\label{Bgamma}
 B_\gamma = (T_B)^{1/2}Q_\gamma (T_B)^{1/2}
 \end{equation}
   and
  \begin{equation}\label{Agamma2}
  A_\gamma = \tr_B(I \otimes Q_\gamma)\wT(I \otimes Q_\gamma),
    \end{equation}
and the sum of the projections $Q_\gamma$ will be the projection onto the image of $T_B$.
\end{theorem}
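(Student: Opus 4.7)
The plan is to derive Theorem \ref{main} by reducing to Theorem \ref{orthogonal} via the filter $F = ((T_B)^\#)^{1/2}$. The equivalence (ii) $\iff$ (iii) is nothing more than Theorem \ref{orthogonal} applied to $\wT$, so the real work is in (i) $\iff$ (ii) and in the existence and uniqueness of the normalized decomposition.

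For (i) $\implies$ (ii), begin with any $B$-independent decomposition $T = \sum_\gamma A_\gamma \otimes B_\gamma$, discarding zero terms so $\tr A_\gamma > 0$. Since $T_B = \sum_\gamma \tr(A_\gamma)\, B_\gamma$ and positive scaling does not change images, the matrices $\tr(A_\gamma)\, B_\gamma$ have independent images summing to $\im T_B$. Lemma \ref{projections}, applied with $X_\gamma = \tr(A_\gamma)\, B_\gamma$, then shows that $F$ conjugates each $X_\gamma$ to an orthogonal projection, so each $F B_\gamma F$ is a positive scalar multiple of a projection, and these projections are pairwise orthogonal. Hence $\wT = (I \otimes F)\, T\, (I \otimes F) = \sum_\gamma A_\gamma \otimes F B_\gamma F$ is $B$-orthogonal. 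Conversely, for (ii) $\implies$ (i), Lemma \ref{wT lemma} gives $T = (I \otimes T_B^{1/2})\, \wT\, (I \otimes T_B^{1/2})$; any $B$-orthogonal decomposition $\wT = \sum_\gamma A_\gamma \otimes Q_\gamma$ with $Q_\gamma$ orthogonal projections thus lifts to $T = \sum_\gamma A_\gamma \otimes T_B^{1/2} Q_\gamma T_B^{1/2}$, and independence of the new $B$-factors follows because $T_B^{1/2}$ is a bijection on $\im P_B \supset \im Q_\gamma$.

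For existence of the canonical decomposition, start from any $B$-independent decomposition, discard zero terms, rescale each $A_\gamma$ to unit trace (absorbing the scalar into $B_\gamma$), and merge any terms whose normalized $A$'s coincide; independence is preserved because $\im(B+B') = \im B + \im B'$ for positive $B, B'$ by \eqref{image of sum}. For uniqueness, let $T = \sum_\gamma A_\gamma \otimes B_\gamma$ be any decomposition satisfying the stated conditions. The unit-trace constraint forces $\sum_\gamma B_\gamma = T_B$, so Lemma \ref{projections} applied with $X_\gamma = B_\gamma$ shows $Q_\gamma := F B_\gamma F$ are orthogonal projections summing to $P_B$. Then $\wT = \sum_\gamma A_\gamma \otimes Q_\gamma$ is a $B$-orthogonal decomposition with distinct nonzero $A_\gamma$, and the uniqueness clause of Theorem \ref{orthogonal} identifies the $Q_\gamma$ with the joint eigenspace projections of $\{\wT_{ij}\}$ and pins down the $A_\gamma$. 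Inverting $F$ on $\im T_B$ (where $T_B^{1/2}$ is its inverse, and $\im B_\gamma \subset \im T_B$) recovers $B_\gamma = T_B^{1/2} Q_\gamma T_B^{1/2}$, and the formula for $A_\gamma$ is read off from the identity $(I \otimes Q_\gamma)\wT(I \otimes Q_\gamma) = A_\gamma \otimes Q_\gamma$.

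The main obstacle is the mismatch between the two notions of ``canonical'' coefficients: Theorem \ref{orthogonal} supplies distinct nonzero coefficients without trace normalization, while Theorem \ref{main} asks for distinct unit-trace $A_\gamma$, and two positive matrices distinct before normalization can coincide after. This is precisely why existence requires the explicit merging step. Uniqueness, by contrast, is clean, because the unit-trace constraint $\sum B_\gamma = T_B$ is exactly the hypothesis that Lemma \ref{projections} consumes, after which Theorem \ref{orthogonal}'s uniqueness takes over.
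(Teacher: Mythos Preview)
Your proof is correct and follows essentially the same route as the paper: filter via Lemma~\ref{projections} to reduce $B$-independence to $B$-orthogonality, invoke Theorem~\ref{orthogonal} for the canonical form and its uniqueness, and invert the filter via Lemma~\ref{wT lemma}. The only difference is organizational---the paper normalizes and merges terms at the outset of (i)$\implies$(ii) so that $\sum_\gamma B_\gamma = T_B$ and Lemma~\ref{projections} applies directly to the $B_\gamma$, whereas you first apply it to the scaled $\tr(A_\gamma)B_\gamma$ and handle normalization/merging as a separate existence step (and you are slightly more explicit than the paper about why independence survives merging).
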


\begin{proof} If $T$ is $B$-independent, then by definition there are positive matrices  $A_1, \ldots, A_p$ and positive matrices $B_1, \ldots, B_p$ with independent images such that  
\begin{equation} \label{T formula}
T = \sum_{\gamma=1}^p A_\gamma \otimes B_\gamma.
\end{equation}
If necessary, we absorb scalar factors into the $B_\gamma$ so that $\tr A_\gamma = 1$ for all $\gamma$, and we combine terms if necessary so that $A_1, \ldots, A_p$ are distinct.

Now by the definition \eqref{wT def} of $\wT$,
\begin{equation}\label{wT6}
\wT = \sum_{\gamma=1}^p A_\gamma \otimes ((T_B)^\#)^{1/2}B_{\gamma} ((T_B)^\#)^{1/2} = \sum_{\gamma=1}^p A_\gamma \otimes Q_\gamma
\end{equation}
where 
\begin{equation} \label{Qgamma}
Q_\gamma = ((T_B)^\#)^{1/2}B_\gamma ((T_B)^\#)^{1/2}.
\end{equation}
By Lemma \ref{projections}, since $T_B = \sum_\gamma B_\gamma$, then $Q_1, \ldots, Q_p$ are projections with sum the projection onto the image of $\sum_\gamma B_\gamma$, and hence $\sum_\gamma Q_\gamma = P_B$.  Thus $\wT$ is $B$-orthogonal.  Furthermore, by the uniqueness statement of Theorem \ref{orthogonal}, $Q_\gamma$ and $A_\gamma$ must be as described in that Lemma (with $\wT$ in place of $T$). By \eqref{Qgamma}, since each $B_\gamma$ has range contained in the range of $T_B$, then multiplying \eqref{Qgamma} on both sides by $T_B^{1/2}$ gives \eqref{Bgamma}, and \eqref{Agamma2} follows either from \eqref{wT6} or from Theorem \ref{orthogonal}. Thus we have shown that if $T$ is $B$-independent, then $T$ admits a unique representation as specified in the theorem.

To show that $B$-orthogonality of $\wT$ implies $B$-independence of $T$, we apply Theorem \ref{orthogonal} again. We have the representation
$$\wT = \sum_\gamma A_\gamma \otimes Q_\gamma$$
where $A_\gamma$ and $Q_\gamma$ are defined as in Theorem \ref{orthogonal} with $\wT$ in place of $T$. Note that the image of each $Q_\gamma$ will be contained in the image of $\wT_B$, and $\wT_B= P_B$ by Lemma \ref{wT lemma}.  Define $B_1, \ldots, B_p$ by \eqref{Bgamma}. Orthogonality of the $Q_\gamma$ implies that their images are independent. By definition,  $(T_B)^{1/2}$ is invertible on the range of $T_B$, and $\im B_\gamma \subset (T_B)^{1/2}(\im Q_\gamma)$, so    $B_1, \ldots, B_p$ have independent images and are positive. By equation \eqref{sep} of Lemma \ref{wT lemma},
$$T = (I \otimes T_B^{1/2})\wT(I \otimes T_B^{1/2}) = \sum A_\gamma \otimes B_\gamma,$$
so $T$ is $B$-independent.

Finally, equivalence of (ii) and (iii) follows from Theorem \ref{orthogonal}.

\end{proof}

\section*{Connections with QC and CQ quantum channels}

\medskip

We will show in this section that the quantum channels known as  classical-quantum channels and quantum-classical channels correspond under the Choi-Jamio\l kowski isomorphism to density matrices that are in the classes of matrices we have called $A$-orthogonal or $B$-orthogonal respectively. The remainder of this paper is independent of this section.

\begin{definition} Let $\Phi: M_m \to M_n$ be a quantum channel (i.e., a completely positive trace preserving map).  If it is possible to choose $0 \le F_1, \ldots, F_q \in M_m$, $0 \le R_1, \ldots, R_q$, and $\tr R_k=1$ for all $k$
such that 
\begin{equation}\label{Holevo}
\Phi(X) = \sum_{k=1}^q \tr(F_k X) R_k
\end{equation}
such a representation is called a \emph{Holevo form} for $\Phi$. (Note that since $\Phi$ is assumed to be trace preserving, we must have $\sum_k F_k = I$.)

\end{definition}
  
The following notion is due to Holevo \cite{Holevo}, and was further investigated in \cite{HSR} in the context of entanglement breaking maps. 

\begin{definition} A quantum channel $\Phi: M_m \to M_n$ is a \emph{classical-quantum} (CQ) channel if $\Phi$ admits a Holevo form \eqref{Holevo} with $F_1, \ldots, F_q$ rank one projections (necessarily with sum $I_m$ since $\Phi$ is a quantum channel). Similarly, one says $\Phi$  is a \emph{quantum-classical} (QC) channel if $\Phi$ admits a Holevo form with $R_1, \ldots, R_q$ rank one projections with sum $I_n$.
\end{definition}

\begin{definition} If $\Phi:M_m \to M_n$ is a linear map, the associated \emph{Choi matrix} is the matrix in $M_m \otimes M_n$ defined by
$$C_\Phi = \sum_{ij} E_{ij} \otimes \Phi(E_{ij}),$$
where $\{E_{ij}\}$ are the standard matrix units of $M_m$.
\end{definition}

It was shown by Choi \cite{Choi} that $\Phi$ is completely positive iff $C_\Phi$ is positive semi-definite. Note that $\Phi$ will be trace preserving iff $\tr\Phi(E_{ij}) = \delta_{ij}$, or equivalently, iff $ \tr_B C_\Phi = I$.


\begin{lemma} \label{Choi} Let $F_1, \ldots, F_q \in M_m$ and $R_1, \ldots, R_q \in M_n$. Define $\Phi:M_m \to M_n$ by 
$$\Phi(X) = \sum_k \tr(F_k X) R_k.$$
Then the corresponding Choi matrix is 
\begin{equation}\label{Choi3}
C_\Phi = \sum_k F_k^t \otimes R_k.
\end{equation}
\end{lemma}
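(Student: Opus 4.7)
The statement is a direct computation from the definition of the Choi matrix, so I do not foresee any real obstacle; the only point demanding care is keeping track of which index gets swapped, since this is precisely what produces $F_k^t$ rather than $F_k$ on the left tensor factor. My plan has three steps.

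First, apply $\Phi$ to a matrix unit to obtain $\Phi(E_{ij}) = \sum_k \tr(F_k E_{ij})\, R_k$. Second, identify the scalar $\tr(F_k E_{ij})$ as a specific entry of $F_k$: a quick trace calculation gives $\tr(F_k E_{ij}) = (F_k)_{ji}$, since $F_k E_{ij}$ is the matrix whose $j$-th column is the $i$-th column of $F_k$ and whose other columns vanish, so only the $(j,j)$ diagonal entry contributes. Third, substitute into the defining expression $C_\Phi = \sum_{ij} E_{ij} \otimes \Phi(E_{ij})$ and interchange the order of summation to obtain
\[
C_\Phi = \sum_k \left(\sum_{ij} (F_k)_{ji}\, E_{ij}\right) \otimes R_k.
\]
The inner sum is exactly $F_k^t$, since its coefficient of $E_{ij}$ is $(F_k)_{ji} = (F_k^t)_{ij}$. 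The outer sum over $k$ then yields $\sum_k F_k^t \otimes R_k$, as claimed.

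As indicated, there is no substantive difficulty: the lemma is essentially a notational assertion recording how the Choi construction acts on maps presented in Holevo form. What makes it worth singling out is that the anticipated subsequent applications — identifying QC and CQ channels with $B$-orthogonal and $A$-orthogonal density matrices in the sense of Theorem \ref{orthogonal} — require the precise placement of the transpose on the left tensor factor.
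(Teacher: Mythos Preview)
Your proof is correct and follows essentially the same direct computation as the paper: expand $C_\Phi = \sum_{ij} E_{ij}\otimes \Phi(E_{ij})$, evaluate $\tr(F_k E_{ij}) = (F_k)_{ji}$, interchange the sums, and recognize $\sum_{ij}(F_k)_{ji}E_{ij}=F_k^t$. The paper phrases the last identification via the Hilbert--Schmidt orthonormality of $\{E_{ij}\}$ rather than by reading off entries, but this is purely cosmetic.
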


\begin{proof} This follows from \cite[Theorem 2 and Lemma 5]{Stormer}, or directly from the definition of the Choi matrix:
\begin{align*}
C_\Phi = \sum_{ij} E_{ij} \otimes \Phi(E_{ij}) 
&= \sum_{ij} E_{ij} \otimes \sum_k \tr(E_{ij} F_k) R_k\cr
&= \sum_k \left(\sum_{ij} \tr(E_{ij} F_k) E_{ij} \right)\otimes R_k\cr
&= \sum_k \left(\sum_{ij} \tr(E_{ji} F_k^t) E_{ij} \right)\otimes R_k\cr
&= \sum_k F_k^t \otimes R_k,
\end{align*}
where the final equality follows from the fact that the matrix units $\{E_{ij}\}$ are an orthonormal basis for $M_m$ with respect to the Hilbert-Schmidt inner product.
\end{proof}

\begin{theorem} \label{CQ} Let $0 \le T \in M_m \otimes M_n$.  
\begin{enumerate}
\item[(i)] $T$ is the Choi matrix for a QC channel iff $T$ is $B$-orthogonal with $\tr_B T = I$. 
\item[(ii)] $T$ is the Choi matrix for a CQ channel iff $T$ is $A$-orthogonal with $\tr_B T = I$.
\end{enumerate}
\end{theorem}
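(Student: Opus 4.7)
The plan is to prove each of (i) and (ii) by establishing both implications, with the forward directions unpacking the definition of the Choi matrix via Lemma \ref{Choi} and the reverse directions using Theorem \ref{orthogonal} to manufacture a Holevo form.

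For the forward directions, suppose first that $\Phi$ is QC with Holevo form $\Phi(X) = \sum_k \tr(F_k X) R_k$, so that by Lemma \ref{Choi} we have $C_\Phi = \sum_k F_k^t \otimes R_k$. The rank-one projections $R_k$ summing to $I_n$ are automatically pairwise orthogonal (any family of projections summing to the identity satisfies $R_i R_j = 0$ for $i \ne j$), so $C_\Phi$ is $B$-orthogonal, and trace preservation gives $\tr_B C_\Phi = I_m$. The CQ case is symmetric, using that transposition sends rank-one projections to rank-one projections and preserves their sum, so that the $F_k^t$ appearing in $C_\Phi$ are pairwise orthogonal rank-one projections, making $C_\Phi$ an $A$-orthogonal matrix.

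For the reverse direction of (i), assume $T$ is $B$-orthogonal with $\tr_B T = I_m$. Theorem \ref{orthogonal} gives $T = \sum_\gamma A_\gamma \otimes Q_\gamma$ with $Q_\gamma$ pairwise orthogonal projections summing to $P_B$ and $A_\gamma \geq 0$. Refine each $Q_\gamma$ into a sum of rank-one orthogonal sub-projections, and, if $P_B \ne I_n$, pad with extra rank-one projections in the complement of $P_B$ paired with the zero matrix, so that the collection of rank-one projections sums to $I_n$. Setting $F = A_\gamma^t$ for each rank-one piece of $Q_\gamma$ (and $F = 0$ for the padding pieces), the resulting expression matches the Holevo form of a QC channel; Lemma \ref{Choi} confirms its Choi matrix is $T$, and the identity $\sum_\gamma \dim(Q_\gamma)\,A_\gamma = \tr_B T = I_m$, transposed, gives $\sum_k F_k = I_m$, the trace-preservation condition.

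For the reverse direction of (ii), assume $T$ is $A$-orthogonal with $\tr_B T = I_m$. By the symmetric version of Theorem \ref{orthogonal}, $T = \sum_\beta P_\beta \otimes B_\beta$ with $P_\beta$ pairwise orthogonal projections and $B_\beta$ distinct nonzero positive matrices. Then $\tr_B T = \sum_\beta (\tr B_\beta)\,P_\beta = I_m$; inspecting both sides on each $\im P_\beta$ and on $(\sum_\beta P_\beta)^\perp$ forces $\tr B_\beta = 1$ for every $\beta$ and $\sum_\beta P_\beta = I_m$. Refining each $P_\beta$ into rank-one orthogonal sub-projections and transposing produces rank-one projections $F_k$ summing to $I_m$; pairing each such $F_k$ with the corresponding $B_{\beta(k)}$ as $R_k$ yields the desired CQ Holevo form. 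The main obstacles are the padding step in (i) and the passage from $\tr_B T = I_m$ to $\sum_\beta P_\beta = I_m$ in (ii); everything else is a bookkeeping exercise matching the orthogonal structure from Theorem \ref{orthogonal} to the rank-one structure in the definitions of QC and CQ channels.
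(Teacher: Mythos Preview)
Your proof is correct and follows essentially the same approach as the paper: both use Lemma~\ref{Choi} for the forward directions and a spectral refinement plus padding argument for the reverse directions. The only minor differences are that you invoke Theorem~\ref{orthogonal} to obtain the canonical $B$-orthogonal form (whereas the paper works directly from the definition) and that you spell out part~(ii) in detail, including the observation that $\tr_B T = I_m$ forces $\sum_\beta P_\beta = I_m$, whereas the paper simply declares the CQ case ``similar.''
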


\begin{proof} (i) Let $\Phi:M_m \to M_n$ be a QC channel with Choi matrix $T$.  By definition, there is a Holevo representation \eqref{Holevo} with $R_1, \ldots, R_n$ rank one projections with sum $I_n$. By Lemma \ref{Choi} the Choi matrix for $\Phi$ is $$T = \sum_{k=1}^n F_k^t \otimes R_k.$$
Since $\sum_i R_i = I_n$, then $R_1, \ldots, R_n$ are orthgonal, so $T$ is  $B$-orthogonal. 
Since $\Phi$ is a quantum channel, then $\tr_B T = I$.

Conversely, suppose $T$ is $B$-orthogonal with $\tr_B T = I$ and $\rank \tr_A T = n$. Since $T \ge 0$, then $\Phi$ is completely positive, and since $\tr_B T= I$, then $T$ is trace preserving, so $T$ is a quantum channel.  By definition of $B$-orthogonality, we can write
$$T = \sum_{k=1}^p A_k \otimes B_k$$
with $0 \le A_1, \ldots, A_p$ and $0 \le B_1, \ldots, B_p$ with $B_1, \ldots, B_p$ orthogonal.  Via its spectral decomposition, we replace each $B_j$ by a linear combination of orthogonal rank one projections, and absorb scalar factors into the $A_j$'s.  Then we can write
\begin{equation}\label{Holevo2}
T = \sum_{j=1}^q F_k^t \otimes R_k
\end{equation}
with $R_1, \ldots, R_q$ orthogonal rank one projections. Clearly $q \le n$.  If $q < n$, we can define $F_{q+1}, \ldots, F_n$ to be zero, and choose rank one  projections $R_{q+1}, \ldots, R_n$ so that $\sum_i R_i = I_n$.  Thus $\Phi$ admits a Holevo form \eqref{Holevo2} in which $R_1, \ldots, R_n$ are rank one projections with sum $I_n$, so $\Phi$ is a QC channel.

The proof of the characterization of $CQ$ channels is similar.

\end{proof}

\section*{Faces of the separable state space}

A \emph{face} of a convex set $C$ is a subset $F$ such that if $A$ and $B$ are points in $C$ and a convex combination $tA+(1-t)B$ with $0 < t < 1$ is in $F$, then $A$ and $B$ are in $F$.  The intersection of faces is always a face, so for each point $A \in C$ there is a smallest face of $C$ containing $A$, denoted $\face_C A$. 

We let $K$ (or $K_d$) denote the convex set of states on $M_d$, i.e., the density matrices, and $S$ (or $S_{mn}$) denotes the convex set of separable states on $M_m \otimes M_n$. 
There is a canonical 1-1 correspondence between subspaces of $\C^d$ and faces of the state space $K_d$.  If $H$ is a subspace of $\C^d$ and $P$ is the projection onto $H$, then the associated face of $K_d$ is
\begin{equation}\label{F sub P} 
F_P = \{A \in K_d \mid \im A \subset \im P\} = \{A \in K_d \mid \im A \subset H\}.
\end{equation}
This correspondence of subspaces of $\C^d$ and faces of $K_d$ follows from, e.g.,  \cite[eqn. (3.14)]{Alfsen-Shultz}, which says that
$$F_P = \{A \in K_d \mid A = PAP\}.$$
By \eqref{range} this is equivalent to \eqref{F sub P}. (Equation (3.14) of \cite{Alfsen-Shultz} is stated in terms of positive linear functionals $\rho$ on $M_d$ associated with the density matrices $A$ in $M_d$ via $\rho(X) = \tr(AX)$, but it translates easily to \eqref{F sub P} above.)

From this it follows that faces of the state space of $M_m \otimes M_n$ are themselves ``mini state-spaces", i.e., are affinely isomorphic to some $K_p$ for $p\le mn$.  The extreme points of $K$ are precisely the pure states $P_x$,  where $P_x$ denotes the projection onto the span of the unit vector $x$. 

We recall for use below that the separable state space $S$ is compact, as is any face (since faces of closed finite dimensional convex sets are always closed.) The extreme points of $S$ are precisely the pure product states $P_{x \otimes y}$. 

 We now prove that certain faces of the separable state space are themselves ``mini separable state spaces", i.e., are affinely isomorphic to the separable state space $S_{pq}$ of $M_p \otimes M_q$ for some $p \le m, q \le n$.

\medskip

\begin{notation} If $V, W$ are subspaces of $\C^m$, $\C^n$ respectively with $\dim V = p$, $\dim W = q$, then $\Sep(V \otimes W) $ denotes the separable states in $M_m \otimes M_n$ that live on $V \otimes W$ (i.e., whose image is contained in $V \otimes W$). Note $\Sep(V \otimes W)$ is affinely isomorphic to the separable state space $S_{pq}$.
\end{notation}

We will make frequent use of the following implication for subspaces $V \subset \C^m, W \subset \C^n$:
$$ \text{for $x \in \C^m, y \in \C^n$},\quad x \otimes y \in V \otimes W \implies x \in V \text{ and } y \in W,$$
which follows immediately  by expanding bases of $V$ and $W$ to bases of $\C^m$ and $\C^n$ and expressing $x$ and $y$ in terms of these bases.  (Alternatively, cf.  \cite[eqn. (1.7)]{Greub}.

\begin{lemma} \label{product faces}Let $A\in M_m$, $B\in M_n$ be density matrices.  Then
$$\face_S(A \otimes B) = \Sep(\im A \otimes \im B).$$
\end{lemma}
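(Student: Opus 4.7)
The plan is to prove the two inclusions $\face_S(A\otimes B) \subseteq \Sep(\im A\otimes \im B)$ and $\Sep(\im A\otimes \im B) \subseteq \face_S(A\otimes B)$ separately. Write $V = \im A$ and $W = \im B$.

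For the first inclusion, I would show that $\Sep(V\otimes W)$ is itself a face of $S$ that contains $A\otimes B$, so that minimality of $\face_S(A\otimes B)$ forces the inclusion. Letting $P$ be the projection onto $V\otimes W$, the characterization \eqref{F sub P} identifies $\Sep(V\otimes W)$ with $F_P\cap S$. This intersection is a face of $S$: if $tC+(1-t)D\in F_P\cap S$ with $0<t<1$ and $C,D\in S$, then $C,D\in F_P$ by the face property of $F_P\subset K_{mn}$, while $C,D\in S$ is assumed. Since $\im(A\otimes B)=V\otimes W$, we have $A\otimes B\in\Sep(V\otimes W)$, and the first inclusion follows.

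For the reverse inclusion, \eqref{image of sum} shows that every $T\in\Sep(V\otimes W)$ has a decomposition $T=\sum_i\lambda_i P_{x_i\otimes y_i}$ for which each $x_i\otimes y_i\in \im T\subseteq V\otimes W$, forcing $x_i\in V$ and $y_i\in W$ by the implication for product vectors quoted just before Lemma~\ref{product faces}. Since faces are convex, it therefore suffices to prove $P_{x\otimes y}\in \face_S(A\otimes B)$ for every pair of unit vectors $x\in V$, $y\in W$. The key quantitative step is to promote the containments $x\in\im A$ and $y\in\im B$ to operator inequalities $\alpha P_x\le A$ and $\beta P_y\le B$ with $\alpha,\beta\in(0,1)$. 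This follows because $P_x\le P_V$ and $A\ge \lambda_A P_V$, where $\lambda_A>0$ is the smallest positive eigenvalue of $A$; taking $\alpha=\min(\lambda_A,\tfrac{1}{2})$ ensures $\alpha\in(0,1)$, and similarly for $\beta$. Setting $A'=(A-\alpha P_x)/(1-\alpha)$ and $B'=(B-\beta P_y)/(1-\beta)$ yields density matrices, so $A=\alpha P_x+(1-\alpha)A'$ and $B=\beta P_y+(1-\beta)B'$ are convex decompositions.

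Expanding the tensor product then exhibits $A\otimes B$ as a convex combination of four separable density matrices, one of which is $P_{x\otimes y}=P_x\otimes P_y$ with strictly positive coefficient $\alpha\beta$. The defining property of a face immediately places $P_{x\otimes y}$ in $\face_S(A\otimes B)$, completing the proof. The main delicate point is this promotion of the qualitative membership $x\in \im A$ into a genuine operator inequality $\alpha P_x\le A$ with $\alpha\in(0,1)$; once this quantitative bound is secured on both tensor factors, the tensor expansion together with the face property closes the argument mechanically.
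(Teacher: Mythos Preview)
Your proof is correct and follows essentially the same approach as the paper's. The reverse inclusion is identical in spirit: both arguments promote $x\in\im A$, $y\in\im B$ to operator inequalities $\lambda P_x\le A$, $\lambda P_y\le B$ and then expand $A\otimes B$ into four separable summands to trap $P_{x\otimes y}$ in the face. For the first inclusion you argue abstractly that $\Sep(V\otimes W)=F_P\cap S$ is a face of $S$ and invoke minimality, whereas the paper passes to extreme points and uses $\face_S(A\otimes B)\subset\face_K(A\otimes B)$ directly; these are equivalent and equally short.
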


\begin{proof}  Note that both sides are compact convex sets, and hence are the convex hull of their extreme points. The extreme points of both sides will be pure product states, so  we can restrict consideration to such states.

Suppose $P_{x \otimes y} \in \face_S(A \otimes B)$. This is contained in $\face_K(A \otimes B)$, which consists of the density matrices whose images are contained in $\im(A \otimes B) = \im A \otimes \im B$. Thus $x \in \im A$ and $y \in \im B$, so $P_{x \otimes y} \in \Sep(\im A \otimes \im B)$. Thus we shown
$$\face_S(A \otimes B) \subset \Sep(\im A \otimes \im B).$$

For the opposite inclusion, suppose $P_{x \otimes y}$ is any extreme point of  $\Sep(\im A \otimes \im B)$. Then $x \otimes y \in \im A \otimes \im B$ implies that $x \in \im A$ and $y \in \im B$. Hence $P_x$ is in $\face_K(A)$ and $P_y \in \face_K(B)$, so there exists a scalar $\lambda > 0$ such that $\lambda P_x \le A$ and $\lambda P_y \le B$. Then
$$A \otimes B = [(A - \lambda P_x) + \lambda P_x] \otimes [(B - \lambda P_y) + \lambda P_y]$$
Expanding the right sides gives four separable (unnormalized) states, and hence 
$$P_x \otimes P_y = P_{x \otimes y} \in \face_S(A \otimes B).$$
Thus
$$\Sep(\im A \otimes \im B)\subset \face_S(A \otimes B),$$
which completes the proof of the lemma.
\end{proof}

\begin{lemma}\label{sum} Let $V_1, \ldots, V_q$  be subspaces of $\C^m$ and $W_1, W_2, \ldots, W_q$   independent subspaces of $\C^n$. If $0 \not=x \otimes y \in \C^m \otimes \C^n$, let $J= \{\gamma \mid x \in V_\gamma\}$. Then
\begin{equation} \label{xy} x \otimes y \in \sum_{\gamma=1}^q V_\gamma \otimes W_\gamma
\end{equation}
iff $J$ is nonempty and $y \in \sum_{\gamma \in J} W_\gamma.$
\end{lemma}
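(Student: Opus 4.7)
The reverse direction is the easy one: assuming $J$ is nonempty and $y \in \sum_{\gamma \in J} W_\gamma$, write $y = \sum_{\gamma \in J} y_\gamma$ with $y_\gamma \in W_\gamma$. Since $x \in V_\gamma$ for every $\gamma \in J$, each term $x \otimes y_\gamma$ lies in $V_\gamma \otimes W_\gamma$, and summing gives \eqref{xy}.

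The forward direction is the one requiring independence. The plan is to exploit the direct sum $W_1 \oplus \cdots \oplus W_q$ by introducing, for each $\gamma$, the linear map $\pi_\gamma : W_1 + \cdots + W_q \to W_\gamma$ that sends a sum $w_1 + \cdots + w_q$ (with $w_\delta \in W_\delta$) to $w_\gamma$; extend $\pi_\gamma$ to all of $\C^n$ in any linear way (for instance, as zero on a chosen complement of $W_1+\cdots+W_q$). The key properties are $\pi_\gamma|_{W_\delta} = 0$ for $\delta \neq \gamma$ and $\pi_\gamma|_{W_\gamma} = \mathrm{id}$.

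Now suppose $x \otimes y = \sum_{\gamma} z_\gamma$ with $z_\gamma \in V_\gamma \otimes W_\gamma$. Apply $\mathrm{id} \otimes \pi_\gamma$ to both sides. Every term $z_\delta$ with $\delta \neq \gamma$ is killed, while $z_\gamma$ is preserved, so $x \otimes \pi_\gamma(y) = z_\gamma \in V_\gamma \otimes W_\gamma$. Whenever $\pi_\gamma(y) \neq 0$, the vector $x \otimes \pi_\gamma(y)$ is a nonzero elementary tensor in $V_\gamma \otimes W_\gamma$; by the implication recalled just before the lemma, this forces $x \in V_\gamma$, i.e., $\gamma \in J$. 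Summing the identities $z_\gamma = x \otimes \pi_\gamma(y)$ over all $\gamma$ gives $x \otimes y = x \otimes \sum_\gamma \pi_\gamma(y)$, and since $x \neq 0$ we conclude $y = \sum_\gamma \pi_\gamma(y) \in \sum_\gamma W_\gamma$. Discarding the zero terms leaves $y = \sum_{\gamma \in J} \pi_\gamma(y) \in \sum_{\gamma \in J} W_\gamma$, and since $y \neq 0$ at least one summand is nonzero, so $J$ is nonempty.

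The main subtlety is the very first step: one has to be careful that the maps $\pi_\gamma$ are well defined, which is exactly where the independence hypothesis on $W_1,\ldots, W_q$ is used, and that the extension to $\C^n$ does not affect the argument (all we need are the values on each $W_\delta$, which are forced). After that, the argument is a direct computation.
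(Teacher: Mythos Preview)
Your proof is correct and follows essentially the same route as the paper's. Both arguments hinge on the projection maps $\pi_\gamma$ (the paper calls them $P_\gamma$) associated to the direct sum $W_1\oplus\cdots\oplus W_q$, and both apply $I\otimes\pi_\gamma$ to reduce to a single summand; the only cosmetic difference is that the paper first observes $y\in\sum_\gamma W_\gamma$ and then works inside that subspace, whereas you extend the projections to all of $\C^n$ and recover $y=\sum_\gamma\pi_\gamma(y)$ a posteriori from $x\otimes y=\sum_\gamma z_\gamma$.
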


\begin{proof} Assume \eqref{xy} holds. Then
$$x \otimes y \in \sum_{\gamma=1}^q V_\gamma \otimes W_\gamma \subset \left(\sum_\gamma V_\gamma\right) \otimes \left(\sum_\gamma W_\gamma\right)$$
so  $x \in \sum_\gamma V_\gamma$ and $y \in \sum_\gamma W_\gamma$.  Thus without loss of  generality we may assume $\sum_\gamma V_\gamma = \C^m$ and $\sum_\gamma W_\gamma = \C^n$.

Let $P_1, \ldots, P_q$ be the (non-self-adjoint) projection maps corresponding to the linear direct sum decomposition $\C^n = W_1 \oplus \cdots \oplus W_q$. Then for $1 \le \beta \le q$
\begin{equation}\label{eq1}
x\otimes P_\beta y = (I \otimes P_\beta)(x \otimes y) \in V_\beta \otimes W_\beta.
\end{equation}
If we choose $\beta$ so that $P_\beta y \not= 0$, then $x \in V_\beta$, so $J$ is not empty. Then for $\gamma \notin J$, we have $x \notin V_\gamma$, so by \eqref{eq1}, $P_\gamma y = 0$. It follows that $y \in \sum_{\gamma \in J} W_\gamma$.

Conversely, suppose $J$ is nonempty and $y \in \sum_{\gamma \in J} W_\gamma$,
say $y = \sum_{\gamma \in J} y_\gamma$.
Then
$$x \otimes y = \sum_{\gamma \in J} x\otimes y_\gamma \in \sum_{\gamma =1}^qV_\gamma \otimes W_\gamma.$$

\end{proof}

We say the convex hull  of a collection of convex sets $\{C_\alpha\}$ is a \emph{direct convex sum} if each point $x$ in the convex hull has a unique convex decomposition $x = \sum_\alpha \lambda_\alpha x_\alpha$ with $x_\alpha \in C_\alpha$. In the theorem below, $\co \,\bigoplus$ denotes the direct convex sum.

\begin{theorem}\label{faces} Let $T = \sum_\gamma A_\gamma \otimes B_\gamma$ be a density matrix in $M_m \otimes M_n$. Assume that $A_1, \ldots, A_p$ are density matrices with pairwise disjoint ranges,  and that $B_1, \ldots, B_p$ are positive matrices with independent images.  Then the face of the separable state space $S$ generated by $T$ is the direct convex sum 
\begin{equation}
\label{disjoint}
\face_S T = \co\,{\bigoplus_{\gamma=1}^p} \face_S(A_\gamma \otimes B_\gamma).
\end{equation}

\end{theorem}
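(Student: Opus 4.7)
The statement naturally splits into two claims: first, that $\face_S T = \co \bigcup_\gamma \face_S(A_\gamma \otimes B_\gamma)$ as sets, and second, that this convex hull is direct.

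For the set equality, the inclusion $\supset$ is routine: after normalizing, write $\lambda_\gamma = \tr(A_\gamma \otimes B_\gamma)$ and $T_\gamma = \lambda_\gamma^{-1}(A_\gamma \otimes B_\gamma)$ for the nonzero terms. Since $\tr T = 1$, this displays $T = \sum_\gamma \lambda_\gamma T_\gamma$ as a convex combination of density matrices in $S$, so each $T_\gamma$ lies in $\face_S T$, whence $\face_S(A_\gamma \otimes B_\gamma) = \face_S T_\gamma \subset \face_S T$. For the opposite inclusion, I will use that the compact convex set $\face_S T$ is the convex hull of its extreme points, each of which is a pure product state $P_{x \otimes y}$. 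Since $\face_S T \subset \face_K T$, we have $x \otimes y \in \im T$; by \eqref{image of sum} combined with the identity $\im(A_\gamma \otimes B_\gamma) = \im A_\gamma \otimes \im B_\gamma$, this gives $x \otimes y \in \sum_\gamma \im A_\gamma \otimes \im B_\gamma$. Now Lemma \ref{sum}, applied with $V_\gamma = \im A_\gamma$ and $W_\gamma = \im B_\gamma$, produces a nonempty index set $J = \{\gamma : x \in \im A_\gamma\}$ with $y \in \sum_{\gamma \in J} \im B_\gamma$. Since $x \ne 0$ and the images $\im A_\gamma$ are pairwise disjoint, $J$ is a singleton $\{\gamma_0\}$, and hence $x \otimes y \in \im A_{\gamma_0} \otimes \im B_{\gamma_0}$. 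By Lemma \ref{product faces} this places $P_{x \otimes y}$ in $\Sep(\im A_{\gamma_0} \otimes \im B_{\gamma_0}) = \face_S(A_{\gamma_0} \otimes B_{\gamma_0})$, as required.

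To establish directness, suppose $\sum_\gamma \lambda_\gamma T_\gamma = \sum_\gamma \mu_\gamma T'_\gamma$ where each $T_\gamma, T'_\gamma$ is a density matrix in $\face_S(A_\gamma \otimes B_\gamma) = \Sep(\im A_\gamma \otimes \im B_\gamma)$, and $\lambda_\gamma, \mu_\gamma$ are nonnegative reals summing to $1$. Each of $T_\gamma, T'_\gamma$ has image in $\im A_\gamma \otimes \im B_\gamma \subset \C^m \otimes \im B_\gamma$, and so does $Y_\gamma := \lambda_\gamma T_\gamma - \mu_\gamma T'_\gamma$. The independence of $\im B_1, \ldots, \im B_p$ in $\C^n$ lifts to independence of $\C^m \otimes \im B_1, \ldots, \C^m \otimes \im B_p$ in $\C^m \otimes \C^n$: expanding any vector in a basis of $\C^m$ reduces the vanishing of a sum $\sum_\gamma u_\gamma$ with $u_\gamma \in \C^m \otimes \im B_\gamma$ to coordinatewise vanishing in the $\im B_\gamma$. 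Hence $\sum_\gamma Y_\gamma = 0$ forces $Y_\gamma = 0$ for every $\gamma$; taking traces yields $\lambda_\gamma = \mu_\gamma$, and when this common value is positive, $T_\gamma = T'_\gamma$, so the decomposition is unique.

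The main obstacle I expect is aligning the two structural hypotheses with the two distinct moments of the proof: pairwise disjointness of $\im A_\gamma$ must be deployed to collapse the index set $J$ to a singleton in the extreme-point analysis, while independence of $\im B_\gamma$ must be deployed (in its tensor-lifted form) to separate the componentwise vanishing for uniqueness. Once those two uses are correctly located, the rest is routine bookkeeping via Lemmas \ref{product faces} and \ref{sum}.
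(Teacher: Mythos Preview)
Your proof is correct and follows essentially the same route as the paper's: both use Lemma~\ref{sum} together with disjointness of the $\im A_\gamma$ to pin each extreme point $P_{x\otimes y}$ of $\face_S T$ into a single summand, invoke Lemma~\ref{product faces} to identify that summand as $\Sep(\im A_\gamma\otimes\im B_\gamma)$, and establish directness by lifting the independence of the $\im B_\gamma$ to independence of tensor subspaces and then testing the operator identity on vectors. The only cosmetic differences are that the paper uses the slightly smaller subspaces $\im A_\gamma\otimes\im B_\gamma$ (rather than your $\C^m\otimes\im B_\gamma$) for the independence step, and it spells out the ``apply to each vector $\xi$'' passage from $\sum_\gamma Y_\gamma=0$ to $Y_\gamma=0$ that you leave implicit.
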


\begin{proof} We first show that the convex hull on the right side of \eqref{disjoint} is a direct convex sum. First note that by the assumption that the images of the $B_\gamma$ are independent, it follows that the subspaces $\im A_\gamma \otimes \im B_\gamma$ are independent. (Indeed, combining product bases of $\im A_1 \otimes \im B_1, \ldots, \im A_p \otimes \im B_p$  gives a  basis of $\sum_\gamma \im A_\gamma \otimes \im B_\gamma$, from which the independence claim follows.)


Now suppose $C_\gamma, D_\gamma \in \face_S(A_\gamma \otimes B_\gamma)$ for $1 \le \gamma \le p$, and that 
\begin{equation} \label{direct}
\sum_\gamma C_\gamma = \sum_\gamma D_\gamma.
\end{equation}  
Then for any $\xi \in \C^m \otimes \C^n$
$$\sum_\gamma C_\gamma \xi = \sum_\gamma D_\gamma\xi.$$
Since 
$$\face_S(A_\gamma \otimes B_\gamma) \subset \face_K(A_\gamma \otimes B_\gamma)= \{E \in K \mid \im E \subset \im (A_\gamma \otimes B_\gamma)\},$$
then for each $\gamma$, $C_\gamma\xi$ and $D_\gamma\xi$ are in $\im (A_\gamma \otimes \B_\gamma)= \im A_\gamma \otimes \im B_\gamma$.  Hence by independence of the subspaces $\im A_\gamma \otimes \im B_\gamma$ , 
we  must have $C_\gamma \xi = D_\gamma\xi$ for each $\gamma$ and each vector $\xi$. Therefore $C_\gamma = D_\gamma$ for all $\gamma$, showing that the convex hull is indeed a direct convex sum.

Next we prove the equality in \eqref{disjoint}. Suppose $P_{x \otimes y}$ is in the left side.  Since the face that the state $P_{x \otimes y}$ generates in $S$ is contained in the face this state generates in $K$, then $x \otimes y $ is contained in the image of $ \sum_\gamma A_\gamma \otimes B_\gamma$, which is $\sum_\gamma \im A_\gamma \otimes \im B_\gamma$ (cf. \eqref{image of sum}). Since by assumption $A_1, \ldots, A_p$ are disjoint, the set $J$ in  Lemma \ref{sum} is a singleton set, so there is some $\beta$ such that $x \in \im A_\beta$ and $y \in \im B_\beta$. Then by Lemma \ref{product faces}, $P_{x \otimes y} \in \face_S(A_\beta \otimes B_\beta)$, which shows the left side of \eqref{disjoint} is contained in the right.

The extreme points of the right side are each contained in some $\face_S(A_\beta \otimes B_\beta)$, and since $A_\beta \otimes B_\beta$ is one of the summands on the left, then
$$\face_S(A_\beta \otimes B_\beta) \subset \face_S(\sum_\gamma A_\gamma \otimes B_\gamma),$$
which completes the proof of \eqref{disjoint}.

\end{proof}

\begin{lemma}\label{faces3} If $x$ is a unit vector in $\C^m$ and $B$ is a density matrix in $M_n$, then 
\begin{equation}\label{faces equal}
\face_S(P_x \otimes B) = \face_{K_{mn}}(P_x \otimes B) = P_x \otimes \face_{K_n} B.
\end{equation}
\end{lemma}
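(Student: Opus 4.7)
The plan is to establish the chain of inclusions
\begin{equation*}
P_x \otimes \face_{K_n}(B) \;\subset\; \face_S(P_x \otimes B) \;\subset\; \face_{K_{mn}}(P_x \otimes B) \;\subset\; P_x \otimes \face_{K_n}(B).
\end{equation*}
The middle inclusion is standard: $\face_{K_{mn}}(P_x \otimes B) \cap S$ is a face of $S$ containing $P_x \otimes B$ (using \eqref{image of sum} to see that the image of a convex combination of density matrices is the sum of their images), so minimality of $\face_S(P_x \otimes B)$ gives the inclusion.

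For the first inclusion I would invoke Lemma \ref{product faces}, which gives $\face_S(P_x \otimes B) = \Sep(\C x \otimes \im B)$. For any $D \in \face_{K_n}(B)$, the product state $P_x \otimes D$ is separable with image contained in $\C x \otimes \im D \subset \C x \otimes \im B$, and hence lies in $\Sep(\C x \otimes \im B)$.

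The third inclusion is where the substance is. I would first prove a structural observation: any positive $E \in M_m \otimes M_n$ with $\im E \subset \C x \otimes H$ has the form $E = P_x \otimes D$ for a unique positive $D \in M_n$ with $\im D \subset H$, and $\tr E = \tr D$. Indeed, the image condition combined with \eqref{range} forces $(P_x \otimes P_H) E (P_x \otimes P_H) = E$; restricting $E$ to $\C x \otimes H$ and using the canonical identification $\C x \otimes \C^n \cong \C^n$ produces a candidate $D \in M_n$ (extended by zero on $H^\perp$), and one checks $E = P_x \otimes D$ on the four orthogonal summands of $\C^m \otimes \C^n = (\C x \oplus (\C x)^\perp) \otimes (H \oplus H^\perp)$, with positivity and the trace identity immediate. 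Applying this with $H = \im B$ to any $E \in \face_{K_{mn}}(P_x \otimes B)$---which by \eqref{F sub P} applied to $P_x \otimes P_B$ (the projection onto $\im(P_x \otimes B) = \C x \otimes \im B$) is precisely a density matrix with image in $\C x \otimes \im B$---yields $E = P_x \otimes D$ where $D$ is a density matrix in $\face_{K_n}(B)$. The only mildly tedious step is the structural observation itself, but it is routine tensor-product bookkeeping and poses no real obstacle.
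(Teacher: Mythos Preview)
Your proof is correct and follows essentially the same approach as the paper: both hinge on the structural observation that any density matrix with image in $\C x \otimes \im B$ must have the form $P_x \otimes D$ with $D \in \face_{K_n}(B)$. The only difference is organizational---you package the argument as a single circular chain of inclusions (using the general fact that $\face_{K_{mn}}(P_x\otimes B)\cap S$ is a face of $S$ for the middle step), whereas the paper proves the two equalities separately, obtaining $\face_S = \face_{K_{mn}}$ from the observation that every vector in $\C x \otimes \im B$ is a product vector; your parenthetical appeal to \eqref{image of sum} for the middle inclusion is unnecessary, since the intersection of a face of $K_{mn}$ with $S$ is automatically a face of $S$.
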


\begin{proof}  By Lemma \ref{product faces}, 
$$\face_S(P_x \otimes B) = \Sep(\im P_x \otimes \im B) = \Sep(\C x \otimes \im B).$$ 
Since every vector in $\C x \otimes \im B$ is a product vector, every density matrix whose image is contained in $\C x \otimes \im B$ is separable, as can be seen from   its spectral decomposition.  Thus by \eqref{F sub P} 
$$\Sep(\im P_x \otimes \im B) = \{E \in K_{mn} \mid \im E \subset  \im (P_x \otimes B)\} = \face_{K_{mn}} (P_x \otimes B),$$
 so the first equality of \eqref{faces equal} follows.

 Now we prove the second equality of \eqref{faces equal}.  If $T = P_x \otimes A$ with $A \in \face_{K_n}B$, then 
 $$\im T = \im (P_x \otimes \im A) \subset \im (P_x \otimes B)$$
 implies that $T \in \face_{K_{mn}}(P_x \otimes B) $, so we've shown
  $$P_x \otimes \face_{K_n} B \subset \face_{K_{mn}}(P_x \otimes B). $$
  To prove the reverse inclusion, let $T \in \face_{K_{mn}} (P_x \otimes B)$. Then $\im T \subset \C x \otimes \im B$.  We will prove there exists $A \in \face_{K_n}B$ such that $T = P_x \otimes A$, which will complete the proof of the lemma. 
 
 Since $\im T \subset \C x \otimes \im B$, for each $y \in \C^n$ there exists a unique $w \in \im B$ such that $T(x \otimes y) = x \otimes w$. Define $A \in M_n$ by $x \otimes Ay = T(x \otimes y)$ for $y \in \C^n$, and observe that $\im A \subset \im B$. 
 
For $z \in \C^m$, if  $z = x $ or $z \perp x$ we have $T(z \otimes y) = (P_x \otimes A)(z \otimes y)$.  It follows that $T = P_x \otimes A$. Since $T$ is a density matrix, it follows that $A$ also is a density matrix.  Since $\im A \subset \im B$, then $A \in \face_{K_n}B$.   Thus $T \in P_x \otimes \face_{K_n}B$.

\end{proof}

In Theorem \ref{faces}, the faces of the separable state space are expressed in terms of other (smaller) separable state spaces. In some circumstances, these are actually state spaces of the full matrix algebras, as we now show.  (This generalizes \cite[Thm. 4]{JMP}.)

\begin{theorem}\label{faces4} Let $T = \sum_{\gamma=1}^p A_\gamma \otimes B_\gamma$ be a density matrix in $M_m \otimes M_n$. Assume that $A_1, \ldots, A_p$ are rank one density matrices,  and that $B_1, \ldots, B_p$ are positive matrices with independent images.  Then there are unit vectors $x_1, \ldots x_q$, with $P_{x_1}, \ldots, P_{x_q}$ distinct, and independent density matrices $C_1, \ldots, C_q$ in $M_m$, such that $T$ admits the convex decomposition
\begin{equation}
T = \sum_{\nu=1}^q \lambda_\nu P_{x_\nu} \otimes C_\nu.
\end{equation}
 This decomposition is unique, and the face of $S$ generated by each $P_{x_\nu} \otimes C_\nu$ is also a face of $K_{mn}$, so that
\begin{equation}
\label{disjoint2}
\face_S T = {\co}\,{\bigoplus_{\nu=1}^q} \face_{K_{mn}}(P_{x_\nu} \otimes C_\nu)
= {\co}\, {\bigoplus_{\nu=1}^q} (P_{x_\nu} \otimes \face_{K_n} C_\nu).
\end{equation}
\end{theorem}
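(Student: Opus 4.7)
The plan is to reduce this to Theorem \ref{main}, Theorem \ref{faces}, and Lemma \ref{faces3} by grouping parallel rank one projections. Since each $A_\gamma$ is a rank one density matrix, write $A_\gamma = P_{z_\gamma}$ for some unit vector $z_\gamma$. Let $P_{x_1},\dots,P_{x_q}$ enumerate the distinct projections among $A_1,\dots,A_p$, partition $\{1,\dots,p\}$ into the classes $I_\nu = \{\gamma : P_{z_\gamma}=P_{x_\nu}\}$, and set $D_\nu = \sum_{\gamma \in I_\nu} B_\gamma$. Then
\[ T = \sum_{\nu=1}^q P_{x_\nu}\otimes D_\nu. \]
By \eqref{image of sum}, $\im D_\nu = \sum_{\gamma\in I_\nu}\im B_\gamma$, and since $\{I_\nu\}$ partitions the index set while the original $\im B_\gamma$ are independent, the images $\im D_1,\dots,\im D_q$ are independent. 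Setting $\lambda_\nu = \tr D_\nu$ and $C_\nu = D_\nu/\lambda_\nu$ yields the claimed convex decomposition, with the $C_\nu$ density matrices having independent images and the positive $\lambda_\nu$ summing to one.

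For uniqueness, I rewrite the decomposition as $T = \sum_\nu P_{x_\nu}\otimes(\lambda_\nu C_\nu)$. This matches the unique-decomposition clause of Theorem \ref{main}: the $P_{x_\nu}$ are distinct positive matrices of trace one, and the $\lambda_\nu C_\nu$ are positive with independent images. Hence the pairs $(P_{x_\nu},\lambda_\nu C_\nu)$ are determined by $T$, and normalizing pins down $\lambda_\nu$ and $C_\nu$ individually.

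For the face equality, I apply Theorem \ref{faces} to the same decomposition. The hypotheses hold: the ranges of the $P_{x_\nu}$ are distinct complex lines in $\C^m$, which intersect only at $0$ and are therefore pairwise disjoint, while the images of the $\lambda_\nu C_\nu$ are independent as just established. Theorem \ref{faces} then gives
\[ \face_S T = \co\bigoplus_{\nu=1}^q \face_S(P_{x_\nu}\otimes C_\nu), \]
using that positive rescaling does not affect the face. Lemma \ref{faces3} then identifies each summand with both $\face_{K_{mn}}(P_{x_\nu}\otimes C_\nu)$ and $P_{x_\nu}\otimes \face_{K_n}C_\nu$, producing \eqref{disjoint2}. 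The argument is essentially assembly of prior results; the only point that deserves real attention is verifying the disjoint-range hypothesis of Theorem \ref{faces}, which reduces to the observation that two distinct one-dimensional complex subspaces intersect only at $0$.
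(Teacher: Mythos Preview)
Your argument is correct and follows essentially the same route as the paper: group the $A_\gamma$ by equal rank-one projections, sum the corresponding $B_\gamma$, and then invoke Theorem~\ref{faces} together with Lemma~\ref{faces3} for the face identities. You in fact go slightly further than the paper's proof by explicitly deriving the uniqueness claim from Theorem~\ref{main}, a step the paper's own proof leaves unaddressed.
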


\begin{proof} By assumption, each $A_\gamma$ is a postive scalar multiple of a projection $P_{x_\gamma}$, where $x_\gamma$ is a unit vector in $\C^m$.  Absorbing this scalar into $B_\gamma$, we write the given decomposition in the form
$$T = \sum_{\gamma= 1}^p P_{x_\gamma} \otimes \widetilde B_\gamma.$$
Now we collect together terms where the first factors $P_{x_\gamma}$ coincide.  In precise terms, we define an equivalence relation on the indices $\{1, \ldots, p\}$ by $\gamma \sim \kappa$ if $\C x_\gamma = \C x_\kappa$, or equivalently if $P_{x_\gamma} = P_{x_\kappa}$. Let $J$ be the set of equivalence classes, and for each equivalence class $\nu \in J$ choose a representative $\gamma \in \nu$ and define $\widetilde x_\nu = x_\gamma$. 
Then
 $$T = \sum_{\nu \in J} P_{\widetilde x_\nu} \otimes C_\nu$$
 where $C_\nu = \sum_{\gamma \in \nu} B_\gamma$.
 Define $q= |J| = q$; numbering the members of $J$ in sequence gives a decomposition of the form specified in the theorem.

Since the images of the $P_{\widetilde x_\nu}$ are disjoint, the final statement of the theorem   follows from Theorem \ref{faces} and Lemma \ref{faces3}.

\end{proof}


\section*{Decompositions into pure product states}

If $T$ is $B$-independent, Theorem \ref{main} provides a canonical way to decompose $T$.  Then with the notation of Theorem \ref{main}, we can decompose each $A_\gamma$ and $B_\gamma$ further via the spectral theorem into linear combinations of rank one projections, and  this gives a representation of $T$ as a convex combination of pure product states. The next result describes when this decomposition into pure product states is unique, generalizing  the uniqueness result in \cite[Corollary 5]{JMP}.

\begin{theorem}\label{unique} If $T \in M_m \otimes M_n$ is a $B$-independent density matrix, then there is a unique decomposition of $T$ as a convex combination of pure product states iff in the canonical decomposition \eqref{T2} of Theorem \ref{main}, each $A_\gamma$ and each $B_\gamma$ has rank one.  Thus the decomposition of $T$ into pure product states is unique iff
$T$ can be written as a convex combination
$$T = \sum_{\gamma=1}^p \lambda_\gamma P_{x_\gamma} \otimes P_{y_\gamma}$$
with unit vectors $y_1, \ldots, y_p$ that are linearly independent, and unit vectors $x_1, \ldots, x_p$ such that $P_{x_1}, \ldots, P_{x_p}$ are distinct.

\end{theorem}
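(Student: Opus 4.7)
For the converse direction ($\Leftarrow$), suppose in the canonical decomposition each $A_\gamma$ and $B_\gamma$ has rank one. Since $\tr A_\gamma = 1$, we must have $A_\gamma = P_{x_\gamma}$ and $B_\gamma = \lambda_\gamma P_{y_\gamma}$ for unit vectors $x_\gamma, y_\gamma$ and some $\lambda_\gamma > 0$. Independence of the images $\{\im B_\gamma\}$ becomes linear independence of $\{y_\gamma\}$, and distinctness of $\{A_\gamma\}$ becomes distinctness of $\{P_{x_\gamma}\}$. I then apply Theorem \ref{faces4} with $C_\gamma = P_{y_\gamma}$: since $\face_{K_n} P_{y_\gamma} = \{P_{y_\gamma}\}$, the theorem gives $\face_S T = \co\bigoplus_\gamma \{P_{x_\gamma} \otimes P_{y_\gamma}\}$, a direct convex hull and hence a simplex whose vertices are the pure product states $P_{x_\gamma} \otimes P_{y_\gamma}$. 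The extreme points of this simplex are exactly these vertices, so any pure product state in $\face_S T$ must coincide with one of them. Any convex decomposition of $T$ into pure product states therefore has the form $T = \sum_\gamma \mu_\gamma P_{x_\gamma} \otimes P_{y_\gamma}$, and uniqueness of the direct convex sum forces $\mu_\gamma = \lambda_\gamma$.

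For the forward direction ($\Rightarrow$), I argue the contrapositive by constructing two distinct pure product decompositions when either some $A_\gamma$ or some $B_\gamma$ has rank at least two. By the symmetry interchanging the $A$- and $B$-systems (as noted in the introduction), it suffices to treat the case where some $A_\gamma$, say $A_1$, has rank at least two. Pick a unit $x \in \im A_1$ with $\C x \ne \C e_i$ for every spectral eigenvector $e_i$ of $A_1$, and $\epsilon > 0$ small enough that $A_1 - \epsilon P_x \ge 0$. Then $A_1$ admits two different representations as a sum of rank-one positive matrices, the spectral one $\sum_i \alpha_i P_{e_i}$ and the perturbed one $\epsilon P_x + \sum_j \alpha_j'' P_{x_j''}$, where the latter sum is the spectral expansion of $A_1 - \epsilon P_x$. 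Spectrally decompose each remaining $A_\gamma$ and each $B_\gamma$ and combine with either decomposition of $A_1$ to obtain two convex decompositions of $T$ into pure product states.

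The main obstacle, and the heart of this direction, is verifying that the two decompositions of $T$ are genuinely different as multisets of weighted pure product states, i.e., that the difference in the $\gamma = 1$ part cannot be absorbed by the remaining terms. Fix any eigenvector $y_{1k}$ of $B_1$: since the images $\{\im B_\gamma\}$ are independent and hence pairwise disjoint, $y_{1k}\in \im B_1$ implies $y_{1k} \notin \im B_\gamma$ for $\gamma \ne 1$, so no pure product state of the form $P_{u \otimes y_{1k}}$ can appear in the decomposition of any $A_\gamma \otimes B_\gamma$ with $\gamma \ne 1$. Extracting the ``$P_{y_{1k}}$-component'' of either of the two decompositions of $T$ therefore recovers, up to the common scalar $\beta_{1k}$ and tensoring with $P_{y_{1k}}$, the corresponding decomposition of $A_1$. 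Since $P_x$ appears in one of these $A_1$ decompositions but not the other (by choice of $x$), the two decompositions of $T$ must differ, completing the contrapositive and thereby proving the theorem.
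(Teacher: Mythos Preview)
Your converse direction is essentially identical to the paper's: both invoke Theorem~\ref{faces4} to conclude that $\face_S T$ is the direct convex sum of the singleton faces $\{P_{x_\gamma}\otimes P_{y_\gamma}\}$, whence uniqueness follows. Your forward direction also follows the paper's contrapositive strategy (a higher-rank factor yields multiple pure-product decompositions), and you in fact supply more detail than the paper does in checking that the two decompositions of $T$ are genuinely distinct.

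There is, however, a small gap in your reduction step. You write that ``by the symmetry interchanging the $A$- and $B$-systems'' it suffices to treat the case $\rank A_1\ge 2$. That symmetry is a statement about the paper's results in the aggregate, not about this particular $T$: the canonical decomposition of Theorem~\ref{main} has \emph{asymmetric} hypotheses ($B_\gamma$ with independent images, $A_\gamma$ merely distinct with unit trace), and there is no reason $T$ should be $A$-independent. So you cannot swap roles here. Fortunately the omitted case is even easier than the one you did: if $\rank B_1\ge 2$, decompose $B_1$ in two different ways into rank-one pieces; every resulting second tensor factor lies in $\im B_1$, which by independence meets no $\im B_\gamma$ for $\gamma\ne 1$, so the $\gamma\ne 1$ terms cannot compensate and the two decompositions of $T$ differ. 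The paper handles both cases with the single remark ``the same argument applies if any $B_\gamma$ does not have rank one,'' which is correct precisely because the isolating mechanism is the independence of the $B_\gamma$ images in \emph{both} cases---not an $A\leftrightarrow B$ symmetry.
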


\begin{proof} Suppose that $T$ is $B$-independent and admits a unique decomposition as a convex combination of pure product states. Let $T = \sum_\gamma A_\gamma \otimes B_\gamma$ be the canonical decomposition of $T$ given in Theorem \ref{main}.  If any $A_\beta$ does not have rank one, then there are infinitely many ways to write $A_\beta$ as a convex combination of pure states, which when combined with any decomposition into rank one projections for  the other $A_\gamma$ and each $B_\gamma$ gives infinitely many decompositions of $T$ into pure product states.  The same argument applies if any $B_\gamma$ does not have rank one. Hence if $T$ admits a unique convex decomposition into pure product states, each $A_\gamma$ and $B_\gamma$ must have rank one.

Conversely, assume that $T$ can be written as a convex combination
 $$T = \sum_\gamma \lambda_\gamma P_{x_\gamma}\otimes P_{y_\gamma}$$
with $\{P_{x_\gamma}\}$ distinct and with $\{y_\gamma\}$ independent.  
This decomposition satisfies the hypotheses of Theorem \ref{faces4}, and thus  $\face_S(T)$ will be the direct convex sum of the singleton faces $\{P_{x_\gamma} \otimes P_{y_\gamma}\}$. 
 Now suppose that we are given any other convex decomposition into pure product states
$$T = \sum_\nu t_\nu P_{z_\nu} \otimes P_{w_\nu},$$
where we are not making any assumption about independence of $\{P_{w_\nu}\}$ or distinctness of $\{P_{z_\nu}\}$. Then each  $P_{z_\nu} \otimes P_{w_\nu}$ is in $\face_ST$ and is an extreme point of the separable state space $S$. By the definition of a direct convex sum, we conclude that each $P_{z_\nu} \otimes P_{w_\nu}$ must coincide with some $P_{x_\gamma} \otimes P_{y_\gamma}$. Thus the convex decomposition of $T$ into pure product states is unique.
\end{proof}

\begin{remark} One might suspect that for the uniqueness conclusion in Theorem \ref{unique}, it would  suffice for the joint eigenspaces of the $(\wT)_{ij}$ to be one dimensional, but this is not correct, as can be seen by considering $A \otimes P_y$ where $\rank A > 1$.
\end{remark}

\section*{The marginal rank condition}

In this section we specialize previous results to an important class of separable states.

\begin{definition} A density matrix $T \in M_m \otimes M_n$ satisfies the \emph{marginal rank condition} if $\rank T = \max(\rank T_A,\rank T_B)$,
which reduces to $\rank T = \rank T_B$ if $m \le n$, which we will assume in the sequel.
\end{definition}

 
%
%
%
%
%


We will see that such matrices, if separable, are $B$-independent.
The following lemma for states on $M_m \otimes M_n$ appeared for $m = 2$ in \cite{Kraus}, and for general $m, n$   in \cite[Lemma 6, and proof of Theorem 1]{HLVC}.  An alternate shorter proof for general $m, n$  can be found in \cite[Lemma 13]{RuskaiWerner}. It shows that separable density matrices satisfying the marginal rank condition are the same as those that admit a representation \eqref{sep def} with each $A_\gamma$ and $B_\gamma$ of rank one, and with $B_1, \ldots, B_p$ independent.

\begin{lemma} \label{rank lemma} Let $T$ be separable.  Then $T$ admits a decomposition $T = \sum_{i=1}^p \lambda_i P_{x_i \otimes y_i}$ with $y_1, \ldots, y_p$ independent iff $\rank T = \rank T_B$.  \end{lemma}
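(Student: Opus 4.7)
I will prove the two directions separately: the forward implication is a direct consequence of the rank-one product structure, while the reverse requires a reduction to canonical form followed by an inductive argument. For the forward direction, starting from $T = \sum_{i=1}^p \lambda_i P_{x_i \otimes y_i}$ (with $\lambda_i > 0$ and $x_i, y_i$ unit), I first show the $p$ product vectors $x_i \otimes y_i$ are linearly independent in $\C^m \otimes \C^n$. Expanding each $x_i$ in a basis $\{e_k\}$ of $\C^m$, any relation $\sum_i c_i (x_i \otimes y_i) = 0$ rewrites as $\sum_k e_k \otimes \bigl(\sum_i c_i \langle e_k, x_i\rangle y_i\bigr) = 0$; linear independence of $\{y_i\}$ forces $c_i \langle e_k, x_i\rangle = 0$ for every $i,k$, hence $c_i = 0$ since $x_i \ne 0$. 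By \eqref{image of sum}, $\im T = \sspan\{x_i \otimes y_i\}$, so $\rank T = p$. Similarly $T_B = \sum_i \lambda_i P_{y_i}$ has image $\sspan\{y_1,\ldots,y_p\}$ (again by \eqref{image of sum}) of dimension $p$, giving $\rank T_B = p$.

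For the reverse direction, assume $T$ is separable with $\rank T = \rank T_B = r$. The first step is to reduce to the case where $T_B$ is a projection, via the filter $\wT$ from \eqref{wT def}. By Lemma \ref{wT lemma}, $\wT$ is separable with $\tr_A \wT = P_B$; moreover, since the filter $I \otimes ((T_B)^\#)^{1/2}$ acts invertibly on $\C^m \otimes \im T_B$ and $\im T \subset \C^m \otimes \im T_B$ by Lemma \ref{product range}, we have $\rank \wT = \rank T = r$, so $\wT$ also satisfies the marginal rank equality. A decomposition $\wT = \sum_i \mu_i P_{x_i \otimes \tilde y_i}$ with $\{\tilde y_i\}$ independent pulls back via $T = (I \otimes T_B^{1/2})\wT(I \otimes T_B^{1/2})$ to a decomposition $T = \sum_i \mu'_i P_{x_i \otimes y_i}$ where $y_i = T_B^{1/2}\tilde y_i/\|T_B^{1/2}\tilde y_i\|$; independence is preserved since $T_B^{1/2}$ is invertible on $\im T_B \supseteq \sspan\{\tilde y_i\}$. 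So it suffices to treat the case $T_B = P_B$, a rank-$r$ projection.

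The remaining work is an induction on $r$. For $r = 1$, $T$ is rank one with $T = \lambda P_w$, and separability forces $w$ to be a product vector (any summand appearing in a pure product decomposition of $T$ must be parallel to $w$), giving $p = 1$. For the inductive step, the plan is to peel off a single pure product summand: take a product vector $u \otimes v \in \im T$ (such vectors exist since, by \eqref{image of sum}, $\im T$ is spanned by the product vectors appearing in any separable decomposition of $T$), let $\epsilon_0 > 0$ be the largest scalar with $T - \epsilon_0 P_{u \otimes v} \ge 0$, and apply the inductive hypothesis to $T' := T - \epsilon_0 P_{u \otimes v}$. Appending $P_{u \otimes v}$ to the resulting decomposition of $T'$ yields $r$ pure product summands for $T$; independence of the $y$-parts is then automatic, since otherwise $T_B = \sum_i \mu_i P_{y_i}$ would have image of dimension $< r$, contradicting $\rank T_B = r$.

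The main obstacle is the inductive step: one must choose $u \otimes v$ so that $T'$ is simultaneously positive semidefinite, separable, of rank exactly $r - 1$, and still satisfies $\rank T' = \rank T'_B$. A generic choice of product vector in $\im T$ need not preserve separability or the marginal rank equality. This is exactly the content of the cited Lemma~13 of Ruskai--Werner (the shorter route) and of Lemma~6 together with the proof of Theorem~1 in HLVC; I would follow the Ruskai--Werner construction, which selects the product vector as a suitable eigenvector of an auxiliary operator so that every invariant is preserved.
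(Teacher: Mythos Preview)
The paper does not prove this lemma at all: it is quoted from the literature, with citations to \cite{Kraus}, \cite{HLVC}, and \cite{RuskaiWerner}, and used as a black box. So there is no in-paper argument to compare against.

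Your forward direction is correct and complete. For the reverse direction, the overall plan (filter to make $T_B$ a projection, then induct by peeling off one pure product summand) is the standard one, and the reduction via $\wT$ is a nice use of the paper's own machinery. One organizational wrinkle: after you subtract $\epsilon_0 P_{u\otimes v}$, the residual $T'_B = T_B - \epsilon_0 P_v$ is no longer a projection, so the induction must be stated for the unreduced lemma (with the $\wT$-filter reapplied at each step, or simply dropped since it is not essential to the induction).

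The substantive gap is the one you yourself identify. Subtracting a pure product projection from a separable state need not leave a separable state, and there is no a~priori reason the $B$-marginal rank should drop by one (in general $\rank T'_B$ could remain $r$ while $\rank T'$ drops to $r-1$). Producing a product vector $u\otimes v\in\im T$ for which \emph{both} of these hold is exactly the content of \cite[Lemma~13]{RuskaiWerner} and of \cite[Lemma~6 and Theorem~1]{HLVC}, and you defer to those references for it. So your proposal ultimately reduces the lemma to the very sources the paper already cites; it is a correct outline but not an independent proof. If you want a self-contained argument you would need to carry out the eigenvector construction from \cite{RuskaiWerner} (or the product-basis argument from \cite{HLVC}) explicitly.
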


We now show that Theorem \ref{main} gives a practical way to check  whether a particular matrix satisfying the marginal rank condition is separable. Theorem \ref{main} then also provides a way to find an explicit representation of $T$ as a convex combination of tensor products of positive matrices. (For testing separabiltiy, the PPT test also suffices, cf. \cite{HLVC}.)

\begin{theorem} Let $T\in M_m \otimes M_n$ with $\rank T = \rank T_B$. Define $\wT$ as in \eqref{wT def}. Then $T$ is separable iff the matrices $(\wT)_{ij}$ are normal and commute. 
\end{theorem}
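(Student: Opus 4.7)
The plan is to reduce the statement to Theorem \ref{main} by showing that, under the marginal rank hypothesis $\rank T = \rank T_B$, separability of $T$ is equivalent to $B$-independence. Once that equivalence is in hand, the equivalence of (i) and (iii) in Theorem \ref{main} gives exactly the claim, since (iii) says that the blocks $\wT_{ij}$ are normal and mutually commute.

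For the forward direction, suppose $T$ is separable and $\rank T = \rank T_B$. I would invoke Lemma \ref{rank lemma}: the marginal rank condition forces a decomposition
\[
T = \sum_{i=1}^p \lambda_i P_{x_i \otimes y_i} = \sum_{i=1}^p P_{x_i} \otimes (\lambda_i P_{y_i}),
\]
with $y_1,\ldots,y_p$ linearly independent. Then the positive matrices $B_i := \lambda_i P_{y_i}$ have images $\C y_i$, which are by construction independent one-dimensional subspaces. Hence $T$ is $B$-independent in the sense of the paper.

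For the reverse direction, if $T$ is $B$-independent, then by definition it admits a decomposition $T = \sum_\gamma A_\gamma \otimes B_\gamma$ with all $A_\gamma, B_\gamma \ge 0$, which is manifestly a separable decomposition (each $A_\gamma \otimes B_\gamma$ is a positive scalar multiple of a convex combination of pure product states via spectral decompositions of $A_\gamma$ and $B_\gamma$). Thus $T$ is separable; note that no use is made of the marginal rank hypothesis here.

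Combining the two implications, we have shown that for $T$ satisfying $\rank T = \rank T_B$, separability is equivalent to $B$-independence. By the equivalence of (i) and (iii) in Theorem \ref{main}, $B$-independence of $T$ is equivalent to all $\wT_{ij}$ being normal and mutually commuting, which completes the proof. The only nontrivial input is Lemma \ref{rank lemma}, and that is already cited from \cite{HLVC, Kraus, RuskaiWerner}; no new obstacle arises, since the rest is a direct application of Theorem \ref{main}.
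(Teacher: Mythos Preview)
Your proof is correct and follows essentially the same route as the paper: use Lemma \ref{rank lemma} to show that a separable state satisfying the marginal rank condition is $B$-independent, note the trivial converse that $B$-independence implies separability, and then invoke the equivalence (i)$\iff$(iii) of Theorem \ref{main}. The only difference is organizational---you first package the intermediate equivalence ``separable $\iff$ $B$-independent'' and then apply Theorem \ref{main}, whereas the paper chains the implications directly---but the content is identical.
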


\begin{proof}  Assume $T$ has marginal rank. If $T$ is separable then by Lemma \ref{rank lemma}, $T$ is $B$-independent, and hence by Theorem \ref{main}, the matrices $(\wT)_{ij}$ are normal and commute. Conversely, if these matrices are normal and commute, then by Theorem \ref{main} $T$ is $B$-independent, and hence separable.

\end{proof}

\begin{corollary}\label{unique decomp} If $T$ is separable of marginal rank, then $T$ admits a unique decomposition
$$T = \sum_{\gamma=1}^p P_{x_\gamma} \otimes B_\gamma,$$
with $B_1, \ldots, B_p$ positive matrices with independent images, and $x_1, \ldots, x_p$ unit vectors with $P_{x_1}, \ldots, P_{x_p}$ distinct.\end{corollary}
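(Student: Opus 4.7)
The plan is to obtain existence directly from Lemma~\ref{rank lemma} by grouping terms that share the same rank-one $A$-factor, and to derive uniqueness by reducing to the uniqueness clause of Theorem~\ref{main}.

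For existence, since $T$ is separable and satisfies the marginal rank condition, Lemma~\ref{rank lemma} produces a representation
$$T = \sum_{i=1}^r \lambda_i P_{x_i} \otimes P_{y_i}$$
with $\lambda_i > 0$, each $x_i, y_i$ a unit vector, and $y_1, \ldots, y_r$ linearly independent. I would then define an equivalence relation on $\{1, \ldots, r\}$ by $i \sim j$ iff $P_{x_i} = P_{x_j}$ (i.e.\ $\C x_i = \C x_j$), let $I_1, \ldots, I_p$ be the resulting equivalence classes, select a representative unit vector $x_\gamma$ for each class, and set
$$B_\gamma = \sum_{i \in I_\gamma} \lambda_i P_{y_i}.$$
Then $T = \sum_{\gamma=1}^p P_{x_\gamma} \otimes B_\gamma$ with the $P_{x_\gamma}$ distinct by construction. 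Each $B_\gamma$ is positive with $\im B_\gamma = \sspan\{y_i : i \in I_\gamma\}$ (using \eqref{image of sum}), and since $y_1, \ldots, y_r$ are linearly independent and the $I_\gamma$ partition the indices, these spans are independent subspaces of $\C^n$.

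For uniqueness, I would simply observe that any decomposition of the form prescribed by the corollary satisfies the hypotheses of the uniqueness assertion of Theorem~\ref{main}: taking $A_\gamma = P_{x_\gamma}$, each $A_\gamma$ is positive with $\tr A_\gamma = 1$, the $A_\gamma$ are distinct (because the $P_{x_\gamma}$ are distinct), and the $B_\gamma$ are positive with independent images. Theorem~\ref{main} then gives uniqueness.

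There is no substantive obstacle: both halves of the proof are immediate applications of results already established. The only point requiring a line of verification is that grouping the rank-one contributions preserves independence of images of the $B_\gamma$, which follows from the partition structure together with the linear independence of $y_1, \ldots, y_r$.
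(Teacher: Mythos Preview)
Your proof is correct. For uniqueness you and the paper both simply invoke the uniqueness clause of Theorem~\ref{main}. For existence, however, you take a different route: you start from the pure-product decomposition supplied by Lemma~\ref{rank lemma} and then group terms sharing the same rank-one $A$-factor, verifying that the resulting $B_\gamma$ inherit independent images from the partition of an independent set $\{y_i\}$. The paper instead applies Theorem~\ref{main} directly to obtain the canonical decomposition $T=\sum_\gamma A_\gamma\otimes B_\gamma$ (with $\tr A_\gamma=1$, the $A_\gamma$ distinct, and the $B_\gamma$ having independent images) and then uses a rank count to force each $A_\gamma$ to be rank one: independence of the images of the $B_\gamma$ gives $\rank T=\sum_\gamma(\rank A_\gamma)(\rank B_\gamma)$ and $\rank T_B=\sum_\gamma\rank B_\gamma$, so the marginal rank condition $\rank T=\rank T_B$ forces $\rank A_\gamma=1$. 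Your argument is a bit more hands-on and constructive; the paper's is shorter and sidesteps the grouping step entirely, getting existence and uniqueness in one stroke from Theorem~\ref{main}.
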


\begin{proof} By Theorem \ref{main}, there is a unique decomposition
$$T = \sum_\gamma A_\gamma \otimes B_\gamma$$
where each $A_\gamma$ is positive with trace 1, $A_1, \ldots, A_p$ are distinct, and $B_, \ldots, B_p$ are positive and have independent images. By the marginal rank condition,
$$\rank T = \sum_\gamma \rank A_\gamma \rank B_\gamma = \rank T_B = \sum_\gamma \rank B_\gamma$$
which implies that $\rank A_\gamma = 1$ for all $\gamma$. Thus there are unit vectors $x_\gamma$ such that $A_\gamma  = P_{x_\gamma}$. Distinctness of $A_1, \ldots, A_p$ implies that $P_{x_1}, \ldots, P_{x_p}$ distinct.
\end{proof}

\begin{remark}  Uniqueness of the decomposition in Corollary \ref{unique decomp} was first proved in \cite[Section IV]{JMP}. Corollary \ref{unique decomp}   provides an alternate proof, and Theorem \ref{main} provides an explicit way to find that decomposition. 
\end{remark}

\section*{Summary}

We have defined a class of separable states ($B$-independent states) which generalizes the separable states whose rank equals their marginal rank. We have described an intrinsic way to check membership in this class, and for such states we have given a procedure leading to a unique canonical separable decomposition into  product states.

\end{document}